\newcommand{\lct}{\mathrm{lct}}
\newcommand{\etal}{\textit{et~al.}}
\newcommand{\calV}{\mathcal{V}}
\newcommand{\tw}{\mathrm{tw}}
\newcommand{\Branch}{\mathrm{Br}}
\newcommand{\cC}{{\mathscr C\!}}
\newtheorem{theorem}{Theorem}[section]
\newtheorem{proposition}[theorem]{Proposition}
\newtheorem{lemma}[theorem]{Lemma}
\newtheorem{corollary}[theorem]{Corollary}
\definecolor{myblue}{RGB}{80,80,160}
\definecolor{mygreen}{RGB}{80,160,80}
\definecolor{myred}{RGB}{255,0,0}
\definecolor{mybrown}{RGB}{139,69,19}
\newtheoremstyle{case}{}{}{}{}{\bfseries}{:}{ }{}
\theoremstyle{case}
\newtheorem{case}{Case}
\numberwithin{subcase}{case}
\title{Transversals of Longest Cycles in Partial $k$-Trees and Chordal Graphs}
\author{Juan Guti\'errez\textsuperscript{1}\\
	{\footnotesize \textsuperscript{1}Departamento de Ciencia de la Computaci\'on}\vspace{-.2cm}\\
	{\footnotesize Universidad de Ingenier\'ia y Tecnolog\'ia (UTEC), Per\'u}
	\footnote{
		A previous version of this work was partially supported by FAPESP (Proc.~2015/08538-5).
  E-mail:
  jgutierreza@utec.edu.pe.}
}
\begin{document}

\maketitle

\begin{abstract}
  Let $\lct(G)$ be the minimum cardinality of a set of vertices that intersects
  every longest cycle of a 2-connected graph~$G$.
  We show
  that
  $\lct(G)\leq k-1$ if $G$ is a partial $k$-tree
  and that
  $\lct(G)\leq \max \{1, {\omega(G){-}3}\}$ if $G$ is chordal, where $\omega(G)$ is the cardinality of a maximum clique
  in~$G$.
  Those results imply that all longest cycles intersect in 2-connected 
  series parallel graphs and in 3-trees.
\end{abstract}

\section{Introduction}

It is known that, in a 2-connected graph, 
every pair of longest cycles intersect each other.
A natural question is whether all longest
cycles have a vertex in common.
This has in general a negative answer, 
as the Petersen's graph shows.
Thus, it is interesting to look for a set
of vertices  that intersects every longest cycle of the graph.
Such a set is called a \emph{longest cycle transversal},
or just a \emph{transversal}.
The minimum cardinality of a transversal in a graph $G$ is denoted by~$\lct(G)$.
 It is interesting to search for good upper bounds for~$\lct(G)$.
Note that~$\lct(G)=1$ if and only if all longest cycles have a common vertex.

Consider a 2-connected graph $G$ with $n$ vertices.
Thomassen~\cite{Thomassen78} showed that $\lct(G) \leq \lceil n/3 \rceil$. 
This bound was improved by
Rautenbach and Sereni \cite{RautenbachS14}, who proved
that $\lct(G) \leq \lceil \frac{n}{3} - \frac{n^{2/3}}{36} \rceil$.
Jobson \etal{}~\cite{Jobson16}
showed that~$\lct(G)=1$ if $G$ is a dually chordal graph, a class of graphs
that includes doubly chordal, strongly chordal, and interval graphs.
They also mention that their proof can be applied to show that~$\lct(G)=1$ if $G$ is a 2-connected split graph.
Fernandes and the author \cite{Fernandes17} showed that 
$\lct(G) = 1$ if~$G$ is a 3-tree, and that $\lct(G) \leq 2$ if~$G$ is a partial 3-tree.
In this paper, we give results for $\lct(G)$ when $G$ is a partial $k$-tree and when $G$ is chordal.
A previous extended abstract containing these results was presented
at LATIN 2018 \cite{Gutierrez18}.

This paper is organized as follows.
In Section~\ref{section:basic-concepts-pathscycles}, we establish basic concepts on paths and cycles, which includes the
very important concept of attractor.
In Section~\ref{section:basic-concepts-treewidth},
we give definitions on tree decompositions
and branches.
In Section~\ref{section:partialktrees-chordalgraphs},
we define the classes of partial $k$-trees and chordal graphs.
In Section~\ref{section:central-lemma}, we state a
central lemma (Lemma \ref{lemma:core-lpt-lct}) that will be used in the next two sections.
In Section~\ref{section:lct-tw}, we show
that~$\lct(G) \leq k-1$ for every 2-connected 
partial~$k$-tree~$G$ (Theorem~\ref{theorem:lct<tw-1}) and, in Section~\ref{section:lct-chordal},
we show
that~$\lct(G)\leq \max \{1, {\omega(G){-}3}\}$ for every 2-connected chordal graph~$G$ 
(Theorem~\ref{theorem:lct-chordal}).
Finally, in Section~\ref{section:conclusion}, we present some concluding remarks.
In this paper, all graphs considered are simple and the notation used is standard \cite{BondyM08,Diestel10}.


\section{Paths, Cycles, and Attractors}\label{section:basic-concepts-pathscycles}

Given two paths~$C'$ and~$C''$,
if~$C' \cup C''$ is a path or a cycle,
it is denoted by~$C' \cdot C''$.
For a pair of vertices~$\{a,b\}$ in a cycle~$C$, 
let~$C'$ and~$C''$ be the paths 
such that~$C = C' \cdot C''$ and~$V(C') \cap V(C'') = \{a,b\}$.  
We refer to these paths as the \emph{$ab$-parts}\index{parts of a cycle} of~$C$.
Moreover, we can extend this notation and define,
for a triple of vertices~$\{a,b,c\}$
in a cycle~$C$, the \emph{$abc$-parts} of~$C$;
and, when the context is clear, we denote by $C_{ab}$, $C_{bc}$, and $C_{ac}$ the corresponding~$abc$-parts of~$C$.

In what follows,
let~$G$ be a graph and let~$S \subseteq V(G)$. 
We say that~$S$ \emph{separates} vertices~$u$ and~$v$ if~$u$ and~$v$ are
in different components of~$G-S$.
Let~$X \subseteq V(G)$. We say that~$S$ separates~$X$
if~$S$ separates at least two vertices in~$X$.
We say that a path or cycle~$C'$ \emph{$k$-intersects}~$S$\index{$k$-intersects}
if~${|V(C') \cap S|=k}$.
Moreover, we also say that~$C'$~$k$-intersects~$S$ at~$V(C') \cap S$.
A path or cycle~$C'$ \emph{crosses}~$S$ if~$S$ separates~$V(C')$
in~$G$. Otherwise,~$S$ \emph{fences}~$C'$.\index{fence} 
If~$C'$ crosses~$S$ and~$k$-intersects~$S$, then we say 
that~$C'$~$k$-\emph{crosses}~$S$.\index{$k$-cross}
We also say that~$C'$~$k$-crosses~$S$ at~$V(C') \cap S$.
If~$C'$ is fenced by~$S$ and~$k$-intersects~$S$, then we say that~$C'$ 
is~$k$-\emph{fenced} by~$S$ (Figure \ref{fig:defs-fenced-crosses}).

\begin{figure}[h]
\centering
\subfigure[ ]{
\resizebox{.3\textwidth}{!}{
\begin{tikzpicture}
[scale=0.7, label distance=3pt, every node/.style={fill,circle,inner sep=0pt,minimum size=6pt}]
 
    \node at (0,0)[myblue,label=below left :$c$,fill=black, circle](c) {};
    \node at (3,0)[myblue,label=below right:$d$,fill=black, circle](d) {};
    \node at (1.5,2.5981)[myblue,label=above :$b$,fill=black, circle](b) {};
    \node at (3.5,2.5981)[myblue,label=above right:$v_1$,fill=black, circle](v1) {};
    \node at (4.25,1.4)[myblue,label=below right:$v_2$,fill=black, circle](v2) {};
    \node at (-0.5,2.5981)[myblue,label=above left:$v_3$,fill=black, circle](v3) {};

    \node at (1.5,0.866)[myblue,label=above   left:$a$,fill=black, circle](a) {};
    \node at (-1.25,1.4)[myblue,label=below left:$v_4$,fill=black, circle](v4) {};
    \node at (1.5,-1.4)[myblue,label=below left:$v_5$,fill=black, circle](v5) {};

    \draw  (a) -- (b);
    \draw  (b) -- (c);
    \draw  (a) -- (c);
    \draw (a) -- (d);
    \draw (b) -- (d);
    \draw (c) -- (d);
    
    \draw (b) -- (v1);
    \draw (d) -- (v1);
    \draw (b) -- (v2);
    \draw (d) -- (v2);
    
    \draw (b) -- (v3);
    \draw (c) -- (v4);
    \draw (v3) -- (v4);
    \draw (v3) -- (c);
    \draw (v4) -- (b);
    
    \draw (a) -- (v1);
    \draw (a) -- (v2);
    \draw (a) -- (v5);
    \draw (v5) -- (c);

\end{tikzpicture}}
}
\subfigure[ ]{
\resizebox{.3\textwidth}{!}{
\begin{tikzpicture}
[scale=0.7, label distance=3pt, every node/.style={fill,circle,inner sep=0pt,minimum size=6pt}]
 
    \node at (0,0)[myblue,label=below left :$c$,fill=black, circle](c) {};
    \node at (3,0)[myblue,label=below right:$d$,fill=black, circle](d) {};
    \node at (1.5,2.5981)[myblue,label=above :$b$,fill=black, circle](b) {};
    \node at (3.5,2.5981)[myblue,label=above right:$v_1$,fill=black, circle](v1) {};
    \node at (4.25,1.4)[myblue,label=below right:$v_2$,fill=black, circle](v2) {};
    \node at (-0.5,2.5981)[myblue,label=above left:$v_3$,fill=black, circle](v3) {};

    \node at (1.5,0.866)[myblue,label=above   left:$a$,fill=black, circle](a) {};
    \node at (-1.25,1.4)[myblue,label=below left:$v_4$,fill=black, circle](v4) {};
    \node at (1.5,-1.4)[myblue,label=below left:$v_5$,fill=black, circle](v5) {};

    \draw  [color=myblue][line width=1.7pt] (a) -- (b);
    \draw  (b) -- (c);
    \draw  [color=myblue][line width=1.7pt] (a) -- (c);
    \draw (a) -- (d);
    \draw  [color=mygreen] [line width=1.7pt] (b) -- (d);
    \draw  [color=mygreen] [line width=1.7pt] (c) -- (d);
    
    \draw [dashed] [color=red] [line width=1.0pt] (b) -- (v1);
    \draw [dashed] [color=red] [line width=1.0pt] (d) -- (v1);
    \draw [dashed] [color=red] [line width=1.0pt] (b) -- (v2);
    \draw [dashed] [color=red] [line width=1.0pt] (d) -- (v2);
    
    \draw  (a) -- (v2);

    \draw [color=myblue][line width=2pt](b) -- (v3);
    \draw  [color=myblue][line width=2pt](c) -- (v4);
    \draw  [color=myblue][line width=2pt](v3) -- (v4);
    \draw  [color=mygreen] [line width=2pt] (v3) -- (c);
    \draw  [color=mygreen] [line width=2pt] (v4) -- (b);
    
    \draw  [dotted] [color=purple][line width=1.5pt](a) -- (v1);
    \draw   [dotted] [color=purple][line width=1.5pt] (a) -- (v5);
    \draw (v5) -- (c);

    \end{tikzpicture}}
}
\caption{(a) A graph~$G$ with~$S=\{a,b,c,d\}$.
    (b) Consider, in $G$, paths~$P_1=v_1av_5$ and~$P_2=v_3cdbv_4$,
    and cycles~$C_1=v_1bv_2dv_1$ and~$C_2=v_3v_4cabv_3$. Then~$P_1$ and~$C_1$
    cross~$S$, and~$P_2$ and~$C_2$ are fenced by~$S$.
    Moreover,~$P_1$ 1-crosses~$S$,~$P_2$ is~$3$-fenced by~$S$,
    $C_1$ 2-crosses~$S$ and~$C_2$ is 3-fenced by~$S$.
    (Also note that path~$cd$ and cycle~$abda$ are fenced by~$S$.)
    Cycles~$C_2$ and~$v_1bcv_5av_1$ are~$S$-equivalent.
    }
\label{fig:defs-fenced-crosses}
\end{figure}
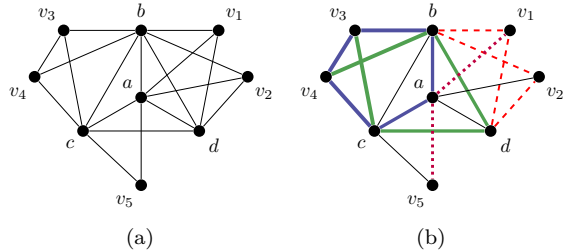

The length of a path or a cycle $C'$ in $G$
is the number of edges of $C'$ and it is denoted by $|C'|$.
A cycle in~$G$ is called a \emph{longest cycle}
if it has maximum length over all cycles in~$G$.
Two cycles are $S$-equivalent
if they intersect $S$ at the same set of vertices (Figure \ref{fig:defs-fenced-crosses}). Let~$C$ be a longest cycle in~$G$.
We say that~$C$ is an
\emph{attractor} for~$S$ if~$C$ is fenced by~$S$ and all
$S$-equivalent longest cycles are also fenced by~$S$.
We say that~$C$ is a~$k$-\emph{attractor} for $S$ if~$C$ $k$-intersects
$S$. In this case,
we also say that~$S$ \emph{has} a~$k$-attractor.

The next proposition is well-known, but, to our knowledge,
no simple proof of it has been written. We use it
several times through the text without making any reference
to it.

\begin{proposition}\label{prop:two-longest-cycles-intersect}
Let~$C$ and~$D$ be a pair of longest cycles in a 2-connected graph.
Then~${|V(C) \cap V(D)| \geq 2}$.
\end{proposition}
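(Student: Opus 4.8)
The plan is to argue by contradiction. Suppose $C$ and $D$ are longest cycles of a 2-connected graph $G$ with $|V(C)\cap V(D)|\le 1$, and put $\ell:=|C|=|D|$. The strategy is to splice together a long sub-arc of $C$, a long sub-arc of $D$, and one or two connecting paths, producing a cycle of length more than $\ell$ and contradicting maximality of $C$. I would distinguish the cases $V(C)\cap V(D)=\emptyset$ and $|V(C)\cap V(D)|=1$.

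\textbf{Disjoint case.} First I would observe that no single vertex $s$ separates $V(C)$ from $V(D)$ in $G$: since a cycle stays connected after deleting one of its vertices, both $V(C)\setminus\{s\}$ and $V(D)\setminus\{s\}$ span connected subgraphs of $G-s$, so if $s$ separated them then $G-s$ would be disconnected, contradicting 2-connectedness. By Menger's theorem there are then two vertex-disjoint $V(C)$--$V(D)$ paths, which (by truncating each at its last vertex of $V(C)$ and first later vertex of $V(D)$) I may take to meet $V(C)\cup V(D)$ only at their ends; call them $P_1$ from $x_1\in V(C)$ to $y_1\in V(D)$ and $P_2$ from $x_2\in V(C)$ to $y_2\in V(D)$, so $x_1\neq x_2$, $y_1\neq y_2$, and $|P_1|,|P_2|\ge 1$. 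Let $A$ be the longer of the two $x_1x_2$-parts of $C$ and $B$ the longer of the two $y_1y_2$-parts of $D$, so $|A|,|B|\ge\lceil\ell/2\rceil$. Since $V(C)\cap V(D)=\emptyset$ and $P_1,P_2$ are internally disjoint from $V(C)\cup V(D)$ and from each other, $A\cdot P_2\cdot B\cdot P_1$ is a (simple) cycle of length $|A|+|B|+|P_1|+|P_2|\ge 2\lceil\ell/2\rceil+2\ge\ell+2>\ell$, a contradiction.

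\textbf{One common vertex.} Say $V(C)\cap V(D)=\{v\}$. Then $V(C)\setminus\{v\}$ and $V(D)\setminus\{v\}$ are nonempty and disjoint, and $G-v$ is connected, so there is a shortest path $Q$ in $G-v$ from some $x\in V(C)\setminus\{v\}$ to some $y\in V(D)\setminus\{v\}$; minimality forces $Q$ to meet $V(C)\cup V(D)$ only at $x$ and $y$, and $x\neq y$, so $|Q|\ge 1$. Letting $A$ be the longer of the two $vx$-parts of $C$ and $B$ the longer of the two $vy$-parts of $D$, the paths $A\cdot Q$ and $B$ are internally disjoint with common ends $v$ and $y$, so $A\cdot Q\cdot B$ is a cycle of length $|A|+|Q|+|B|\ge 2\lceil\ell/2\rceil+1\ge\ell+1>\ell$, again a contradiction. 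Since these cases exhaust $|V(C)\cap V(D)|\le 1$, the proposition follows.

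The only delicate points are the routine bookkeeping that the spliced walks really are simple cycles — which follows mechanically from the disjointness arranged above — and, in the disjoint case, producing the two vertex-disjoint connecting paths; I expect that Menger/separation step to be the main obstacle, although the elementary cut-vertex argument keeps it short. The estimate $2\lceil\ell/2\rceil\ge\ell$ together with the edges contributed by the connecting path(s) always beats $\ell$, so no finer counting is needed.
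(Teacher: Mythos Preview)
Your proof is correct and follows essentially the same strategy as the paper's: connect $C$ and $D$ by two internally disjoint paths (one of which may be trivial in the shared-vertex case) and splice them with arcs of the two cycles to exceed~$\ell$. The paper streamlines this by invoking a single Menger-type statement (Bondy--Murty) that covers both cases at once, and instead of greedily picking the longer arc on each cycle it forms \emph{two} candidate cycles whose lengths sum to more than~$2\ell$; these are cosmetic variations, not a different route.
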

\begin{proof}
Suppose by contradiction that~${|V(C) \cap V(D)| \leq 1}$.
As~$G$ is 2-connected, there exist two disjoint paths~$R$
and~$S$, both of them with one extreme in~$C$, the other in~$D$,
and internally disjoint from both~$C$ and~$D$ \cite[Proposition~9.4]{BondyM08}.
Note that, when $|V(C) \cap V(D)| = 1$,
it can be the case that exactly one of~$\{R,S\}$ has zero length.
Let~$\{x_1\}= V(C) \cap V(R)$ and~$\{x_2\}=V(C) \cap V(S)$.
Let~$\{y_1\}= V(D) \cap V(R)$ and~$\{y_2\}=V(D) \cap V(S)$.
Let~$C'$ and~$C''$ be the two~$x_1x_2$-parts of~$C$.
Let~$D'$ and~$D''$ be the two~$y_1y_2$-parts of~$D$.
Then~$C' \cdot R \cdot D' \cdot S$ and~$C'' \cdot R \cdot D'' \cdot S$
are both cycles, one of them longer than~$|C|$, a contradiction.
\end{proof}

\section{Tree Decomposition and Branches}
\label{section:basic-concepts-treewidth}

A \emph{tree decomposition} \cite[p.~337]{Diestel10} of a graph~$G$
is a pair~$(T, \calV)$, consisting of a tree~$T$ and a collection
$\calV = \{ V_t: t \in V(T)\}$ of (different)
\emph{bags}~$V_t \subseteq V(G)$,
that satisfies the following three conditions:
\begin{itemize}
\item $\bigcup_{t \in V(T)} V_t = V(G);$
\item for every~$uv \in E(G)$, there exists
  a bag~$V_t$ such that~$u,v \in V_t;$
\item if~$v \in V(G)$ is in two different bags~$V_{t_1}$ and $V_{t_2}$,
  then~$v$ is also in any bag~$V_t$ such that~$t$ is on the
  path from~$t_1$ to~$t_2$ in~$T$.
\end{itemize}


The \emph{treewidth}~$tw(G)$ is the number
$\min \{ \max\{|V_t|-1: t\in V(T)\}: (T, \calV)$ is a
tree decomposition of $G  \}$.
We refer to the vertices of~$T$ as \emph{nodes}.


If~$G$ is a graph with treewidth~$k$,
then we say that~$(T, \mathcal{V})$ is a \emph{full tree decomposition}\index{full tree decomposition}
of~$G$ if~$|V_t|=k+1$ for every~$t\in V(T)$,
and~$|V_t \cap V_{t'}|=k$ for every~$tt' \in E(T)$
(Figure~\ref{fig:defs-tree-dec}).
\begin{proposition}[{\cite[Lemma 8]{Bodlaender98}}{\cite[Theorem 2.6]{Gross14}}]\label{prop:full-tree-dec}
Every graph has a full tree decomposition.
\end{proposition}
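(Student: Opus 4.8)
The plan is to start from an arbitrary tree decomposition $(T,\calV)$ of $G$ of width $k=\tw(G)$, which exists by the definition of treewidth, and to reshape it by a few elementary operations until every bag has exactly $k+1$ vertices and every edge of $T$ joins two bags meeting in exactly $k$ vertices. Each operation will be seen to keep $(T,\calV)$ a tree decomposition of the same width; among the three axioms, the only delicate one is the connectivity (subtree) condition, and the single observation that makes everything go through is the following consequence of that condition for the \emph{current} decomposition: if $tt'\in E(T)$ and $v\in V_t\setminus V_{t'}$, then $v$ lies in no bag on the $t'$-side of the edge $tt'$, since otherwise the path in $T$ from $t$ to such a bag would pass through $t'$ and force $v\in V_{t'}$.

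\emph{Phase 1 (make all bags full).} Use two cleanup operations. \textbf{Padding:} if $|V_t|<k+1$ and $t$ has a neighbour $t'$ with $V_{t'}\not\subseteq V_t$, add to $V_t$ one vertex of $V_{t'}\setminus V_t$; that vertex now lies in the two adjacent bags $V_t,V_{t'}$, so connectivity is preserved, and no bag grows beyond $k+1$. \textbf{Reduction:} if $tt'\in E(T)$ with $V_t\subseteq V_{t'}$, delete $t$ and join its other neighbours directly to $t'$, keeping the bag $V_{t'}$; one checks easily that all axioms and the width are preserved, and $|V(T)|$ strictly drops. Apply padding and reduction in any order as long as either is available. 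This terminates, because reduction decreases $|V(T)|$ while padding leaves $|V(T)|$ fixed and decreases $\sum_t(k{+}1-|V_t|)$, so the pair $\bigl(|V(T)|,\sum_t(k{+}1-|V_t|)\bigr)$ strictly decreases in lexicographic order. When neither operation applies, every bag has size exactly $k+1$: if $T$ is a single node $t$ then $V_t=V(G)$ and $|V_t|=k+1$ (otherwise $\tw(G)\le|V_t|-1<k$); and if $T$ has an edge while some $|V_t|<k+1$, pick a neighbour $t'$ of $t$, and note that padding applies if $V_{t'}\not\subseteq V_t$ and reduction applies otherwise — a contradiction either way.

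\emph{Phase 2 (fix the intersections).} Call an edge $tt'$ \emph{bad} if $|V_t\cap V_{t'}|\neq k$; as all bags now have size $k+1$, this means $V_t=V_{t'}$ or $|V_t\cap V_{t'}|=j\le k-1$. \textbf{Merging:} if $tt'$ is bad with $V_t=V_{t'}$, contract it exactly as in the reduction step; this deletes one bad edge and creates none (a common neighbour of $t$ and $t'$ would be a triangle in $T$), and all bags stay full. \textbf{Interpolation:} if $tt'$ is bad with $|V_t\cap V_{t'}|=j\le k-1$, write $V_t\setminus V_{t'}=\{a_1,\dots,a_{k+1-j}\}$, $V_{t'}\setminus V_t=\{b_1,\dots,b_{k+1-j}\}$, subdivide $tt'$ into a path $t=u_0,u_1,\dots,u_{k+1-j}=t'$, and set $W_0=V_t$ and $W_i=(W_{i-1}\setminus\{a_i\})\cup\{b_i\}$, so $W_{k+1-j}=V_{t'}$; every $W_i$ has $k+1$ vertices and consecutive ones meet in exactly $k$, so all new edges are good, and using the displayed observation (no $a_i$ occurs on the $t'$-side, no $b_i$ on the $t$-side) one verifies the connectivity condition for every vertex. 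This too deletes one bad edge and creates none, keeping all bags full. Since each application of merging or interpolation strictly decreases the number of bad edges, iterating these terminates, leaving a tree decomposition in which all bags have size $k+1$ and all edges are good, i.e.\ a full tree decomposition of $G$. (Its bags are automatically pairwise distinct: if $V_s=V_{s'}$ with $s\neq s'$, then every vertex of $V_s$ lies in every bag on the $s$–$s'$ path, so each such bag equals $V_s$, contradicting that each edge on that path is good.)

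The main obstacle is not any individual operation but arranging them so the whole process halts: padding and interpolation add vertices and nodes, so one must stage the argument as above — first drive all bags to size $k+1$, then reduce the number of bad edges — and check that the second phase never reintroduces a bag of size $<k+1$ (it does not, since merging keeps a full bag and interpolation produces only full bags). The remaining work is the routine but necessary verification that each of the four operations preserves the three tree-decomposition axioms and the width; as noted, the connectivity axiom is the only subtle one and in every case reduces to the observation about an edge's private vertices.
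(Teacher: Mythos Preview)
The paper does not actually prove this proposition; it is quoted with citations to Bodlaender and to Gross, and no argument is given in the text. So there is no ``paper's own proof'' to compare against beyond those references.

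Your construction is correct and is essentially the standard one that appears (in various guises) in those references: start from a width-$k$ decomposition, pad small bags with vertices borrowed from neighbouring bags, contract containments, and then interpolate along any edge whose two bags meet in fewer than $k$ vertices. The termination measure you chose for Phase~1 (lexicographic on $(|V(T)|,\sum_t(k{+}1-|V_t|))$) works, and your interpolation in Phase~2 preserves the subtree axiom precisely because of the ``private vertices'' observation you isolated at the outset. One minor remark: after Phase~1 terminates, adjacent equal bags cannot occur (reduction would still apply), so your merging step in Phase~2 is never actually invoked; it does no harm to include it, but you could drop it. Your final sentence correctly disposes of the paper's somewhat nonstandard requirement that all bags be pairwise distinct.
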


\begin{figure}[H]
\centering
\subfigure[ ]{
\resizebox{.3\textwidth}{!}{
\begin{tikzpicture}
[scale=0.5, label distance=3pt, every node/.style={fill,circle,inner sep=0pt,minimum size=6pt}]
    \node at (0,2.5981)[myblue,label=left :$v_1$,fill=black, circle](v1) {};
    \node at (0,0)[myblue,label=below left :$v_8$,fill=black, circle](v8) {};
    \node at (0,5.1962)[myblue,label=above left:$v_5$,fill=black, circle](v5) {};
    
    \node at (3,0)[myblue,label=below:$v_4$,fill=black, circle](v4) {};
    \node at (3,5.1962)[myblue,label=above:$v_2$,fill=black, circle](v2) {};
    
    \node at (6,0)[myblue,label=below right:$v_7$,fill=black, circle](v7) {};
    \node at (6,2.5981)[myblue,label=above right:$v_3$,fill=black, circle](v3) {};
    \node at (6,5.1962)[myblue,label=above right:$v_6$,fill=black, circle](v6) {};

    \draw (v1) -- (v2);
    \draw (v2) -- (v3);
    \draw (v3) -- (v4);
    \draw (v1) -- (v4);;

    \draw (v1) -- (v5);
    \draw (v2) -- (v5);
    \draw (v5) -- (v6);
    \draw (v2) -- (v6);
    \draw (v3) -- (v7);
    \draw (v3) -- (v6);
    \draw (v4) -- (v7);
    \draw (v4) -- (v8);
    \draw (v1) -- (v8);
        
\end{tikzpicture}}
}
\subfigure[ ]{
\resizebox{.23\textwidth}{!}{
\begin{tikzpicture}
[scale=0.5, every node/.style={scale=0.7}]
 \node at (0,-2)[ellipse](t5) {$ $};
     \node at
    (2,2)[draw=black,ellipse](t1) {$v_1,v_2,v_3,v_4$};
    \node at (0,0)[draw=black,ellipse](t5) {$v_1,v_4,v_8$};
    \node at (4,0)[draw=black,ellipse](t4) {$v_3,v_4,v_7$};
    \node at (4,4)[draw=black,ellipse](t3) {$v_2,v_3,v_6$};
    \node at (0,4)[draw=black,ellipse](t2) {$v_1,v_2,v_5$};
    
    \draw (t1) -- (t2);
    \draw (t1) -- (t3);
    \draw (t1) -- (t4);
    \draw (t1) -- (t5);
\end{tikzpicture}}
}
\subfigure[ ]{
\resizebox{.3\textwidth}{!}{

  \begin{tikzpicture}
[scale=0.5, every node/.style={scale=0.7}]
 \node at (0,-1.5)[ellipse](t5) {$ $};
     \node at
    (4,4)[draw=black,ellipse](t1) {$v_1,v_2,v_3$};
     \node at
    (4,2)[draw=black,ellipse](t2) {$v_1,v_3,v_4$};
    \node at (0,0)[draw=black,ellipse](t6) {$v_1,v_4,v_8$};
    \node at (8,0)[draw=black,ellipse](t5) {$v_3,v_4,v_7$};
    \node at (8,6)[draw=black,ellipse](t4) {$v_2,v_3,v_6$};
    \node at (0,6)[draw=black,ellipse](t3) {$v_1,v_2,v_5$};
    
    \draw (t1) -- (t2);
    \draw (t1) -- (t3);
    \draw (t2) -- (t6);
    \draw (t1) -- (t4);
    \draw (t2) -- (t5);
\end{tikzpicture}}
}
\caption{
(a) A graph~$G$ with treewidth two. 
(b) A tree decomposition of~$G$ that is not full.
 (c) A full tree decomposition of~$G$.
}
\label{fig:defs-tree-dec}
\end{figure}
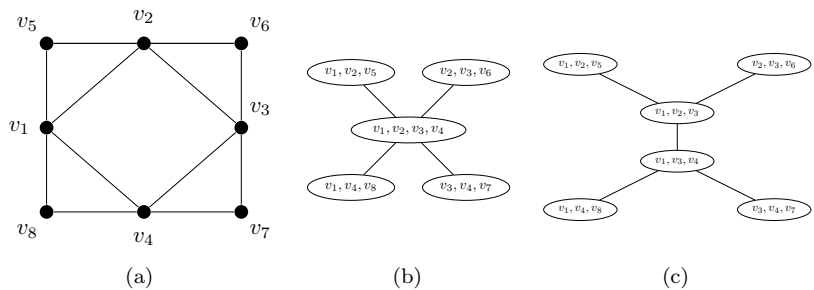


Let~$G$ be a graph and~$(T, \calV)$ be a tree decomposition of~$G$.
Given two different nodes~$t,t' \in V(T)$, we
denote by~$\Branch_t(t')$
the component of~$T-t$ where~$t'$ lies.
We say that such component is a \emph{branch}\index{branch} of~$T$ at~$t$,
and that the components
of~$T-t$ are the \emph{branches} of~$T$ at~$t$~\cite{Heinz13}.
Similarly, for a vertex~${v \notin V_t}$, it is denoted by 
$\Branch_t(v)$
the branch~$\Branch_t(t')$ of~$T$ at~$t$ 
such that~$v \in V_{t'}$.
In that case, we also say that~${v \in \Branch_t(t')}$.

Let~$t \in V(T)$.
Let~$C'$ be a path or cycle in~$G$ fenced by~$V_t$.
It is easy to see that, for every $u,v \in V(C') \setminus V_t$,
we have $\Branch_t(u)=\Branch_t(v)$.
Hence, when $V(C') \not \subseteq V_t$, we say that
$\Branch_t(C')=\Branch_t(v)$, where $v$ is an arbitrary vertex
of $V(C')\setminus V_t$. 
The next proposition relates the concepts of separation and branches.
\begin{proposition}[{\cite[Lemma~12.3.1]{Diestel10}}]\label{prop:core-sep-tt'}
Let~$G$ be a graph and~$(T, \calV)$ be a tree decomposition of~$G$.
  Let~$tt' \in E(T)$. 
  Let~$u,v \in V(G)$ be such that~${u \notin V_{t}}$ and~${v \notin V_{t'}}$.
  If~$u \in \Branch_t(t')$ and~$v \in \Branch_{t'}(t)$, then~$V_{t} \cap V_{t'}$
  separates~$u$ and~$v$.
\end{proposition}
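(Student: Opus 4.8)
The plan is the classical ``cut the tree at the edge $tt'$'' argument. Let $T_1$ and $T_2$ be the two components of $T-tt'$, with $t\in V(T_1)$ and $t'\in V(T_2)$, and set $U_i:=\bigcup_{s\in V(T_i)}V_s$ for $i\in\{1,2\}$ and $W:=V_t\cap V_{t'}$.

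First I would record three facts about this cut. (i) $U_1\cup U_2=V(G)$, directly from the first axiom of a tree decomposition. (ii) $U_1\cap U_2\subseteq W$: if $w\in U_1\cap U_2$, pick $s_1\in V(T_1)$ and $s_2\in V(T_2)$ with $w\in V_{s_1}\cap V_{s_2}$; the $s_1$--$s_2$ path in $T$ uses the edge $tt'$, hence passes through both $t$ and $t'$, so the third axiom puts $w$ in $V_t$ and in $V_{t'}$. (iii) No edge of $G$ joins a vertex of $U_1\setminus U_2$ to a vertex of $U_2\setminus U_1$: by the second axiom such an edge would lie in a common bag $V_s$, and whichever side of the cut the node $s$ belongs to puts \emph{both} endpoints of the edge on that side.

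Next I would locate $u$ and $v$. The component of $T-t$ containing $t'$ is exactly $V(T_2)$, so $u\in\Branch_t(t')$ means $u\in V_s$ for some $s\in V(T_2)$, i.e.\ $u\in U_2$; and since $u\notin V_t\supseteq W$ we have $u\notin W\supseteq U_1\cap U_2$ by (ii), hence $u\in U_2\setminus U_1$. Symmetrically, $v\in U_1\setminus U_2$. In particular $u\neq v$ and $u,v\notin W$, so the assertion ``$W$ separates $u$ and $v$'' is meaningful.

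Finally, suppose for contradiction that some $u$--$v$ path $P$ exists in $G-W$. Since $P$ avoids $W$, it avoids $U_1\cap U_2$ by (ii), so every vertex of $P$ lies in the disjoint union $(U_1\setminus U_2)\cup(U_2\setminus U_1)$. As $u$ lies in the second part and $v$ in the first, traversing $P$ from $u$ to $v$ yields an edge with one end in $U_2\setminus U_1$ and the other in $U_1\setminus U_2$, contradicting (iii); hence $W=V_t\cap V_{t'}$ separates $u$ and $v$. The only step where the connectivity axiom of tree decompositions really does the work is (ii), so that is the one I expect to state with care; everything else is bookkeeping about which side of the cut a given bag or edge lives on.
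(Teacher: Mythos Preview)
Your argument is correct and is the standard proof of this separation property of tree decompositions. Note, however, that the paper does not supply its own proof of this proposition: it is stated with a citation to Diestel's textbook (Lemma~12.3.1 there), where essentially the same ``cut the tree at the edge $tt'$'' argument appears. So there is no alternative approach in the paper to compare against; your write-up matches the classical one.
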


\section{Partial $k$-trees and Chordal Graphs}
\label{section:partialktrees-chordalgraphs}

A~\emph{clique} in a graph is a set
of pairwise adjacent vertices.
A~$k$-\emph{clique} is a clique
of cardinality $k$.
The cardinality of a maximum clique in $G$
is denoted by $\omega(G)$.
A~$k$-\emph{tree}\index{$k$-tree} is defined recursively as follows.  
The complete graph on~$k$ vertices is a~$k$-tree.
Any graph obtained from a~$k$-tree by adding a new
vertex and making it adjacent to exactly 
all the vertices of an existing~$k$-clique is also a~$k$-tree.
A graph~$G$ is a \emph{partial} $k$-\emph{tree}\index{partial $k$-tree} if and only if
$G$ is the subgraph of a~$k$-tree. 
Partial~$k$-trees are closely related to the definition
of tree decomposition.
In fact, a graph~$G$ is a partial~$k$-tree if and only if~${\tw(G) \leq k}$ \cite[Theorem 35]{Bodlaender98}
(Figure \ref{fig:defs-k-tree}).

\begin{figure}[H]
\centering
\subfigure[ ]{
\resizebox{.35\textwidth}{!}{\begin{tikzpicture}
[scale=0.5, label distance=3pt, every node/.style={fill,circle,inner sep=0pt,minimum size=6pt}]
\node at (0,0)[myblue,label=below :$a$,fill=myblue, circle](a) {}; 
\node at (4,0)[myblue,label=below:$b$,fill=myblue, circle](b) {};
\node at (2,3.464)[myblue,label=above:$c$,fill=myblue, circle](c) {};
\node at (5,2)[myblue,label=below   right:$v_1$,fill=black, circle](v1) {};
\node at (2,-1)[myblue,label=below left:$v_2$,fill=brown, circle](v2) {};
\node at (-1,2.5)[myblue,label=above   left:$v_3$,fill=orange, circle](v3) {};
\node at (4,4)[myblue,label= right:$v_4$,fill=magenta, circle](v4) {};
\node at (0,5)[black,label=above   left:$v_5$, fill=purple,circle](v5) {};
\node at (-3,3)[black,label=above   left:$v_6$, fill=mygreen,circle](v6) {};
\node at (6,6)[black,label=above   left:$v_7$, fill=red,circle](v7) {};

 \draw  (a) -- (b);
 \draw (b) -- (c);
 \draw (a) -- (c);
 
  \draw  (a) -- (v1);
   \draw  (b) -- (v1);
    \draw (c) -- (v1);
    
    \draw (b) -- (v2);
    \draw (c) -- (v2);
    \draw (v1) -- (v2);
    
      \draw  (a) -- (v3);
   \draw  (c) -- (v3);
    \draw  (v1) -- (v3);
    
         \draw  (a) -- (v4);
   \draw  (c) -- (v4);
    \draw  (v1) -- (v4);
    
            \draw  (a) -- (v5);
   \draw  (c) -- (v5);
    \draw  (v3) -- (v5);

         \draw  (a) -- (v6);
   \draw  (v5) -- (v6);
    \draw  (v3) -- (v6);    
    
         \draw  (c) -- (v7);
   \draw  (v1) -- (v7);
    \draw  (v4) -- (v7);    
    
  \end{tikzpicture}}
}
\subfigure[ ]{
\resizebox{.35\textwidth}{!}{\begin{tikzpicture}
[scale=0.5, every node/.style={scale=0.7}]

\node[ ]  (V1) at (0,-2) {$ $};

\node[draw=black,fill=black!20,ellipse]  (V1) at (0,0) {$a,b,c,v_1$};
\node[draw=brown,fill=brown!20,ellipse]  (V2) at (4,2) {$b,c,v_1,v_2$};
\node[draw=orange,fill=orange!20,ellipse]  (V3) at (-4,2) {$a,c,v_1,v_3$};
\node[draw=pink,fill=pink!20,ellipse]  (V4) at (0,2) {$a,c,v_1,v_4$};
\node[draw=purple,fill=purple!20,ellipse]  (V5) at (-4,4) {$a,c,v_3,v_5$};
\node[draw=mygreen,fill=mygreen!20,ellipse]  (V6) at (-4,6) {$a,v_3,v_5,v_6$};
\node[draw=red,fill=red!20,ellipse]  (V7) at (0,4) {$v_1,v_4,c,v_7$};

\draw (V1) -- (V2);    
\draw (V1) -- (V3);
\draw (V1) -- (V4); 
\draw (V3) -- (V5);
\draw (V5) -- (V6);   
\draw (V4) -- (V7);
  \end{tikzpicture}}
}
\caption{
(a) A 3-tree~$G$. To construct~$G$, we begin with triangle~$abc$
and add the following sequence of vertices:~$v_1$-$v_2$-$v_3$-$v_4$-$v_5$-$v_6$-$v_7$.
(b) We can obtain a tree decomposition of~$G$ in the following way:
each time we add a new vertex, say~$v_i$, to an already existing triangle,
say~$xyz$, we also add a new node, with corresponding bag~$\{x,y,z,v_i\}$,
to the tree decomposition and we make it adjacent to
an already existing node whose corresponding bag contains~$x,y$ and~$z$. 
Moreover, the tree decomposition obtained by this procedure is a full
tree decomposition of~$G$.}
\label{fig:defs-k-tree}
\end{figure}
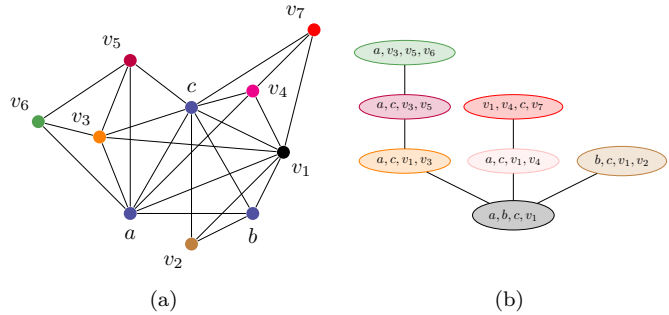

A graph is called \textit{chordal}\index{chordal graph} if every induced cycle has
length three.
A tree decomposition~$(T, \mathcal{V})$ of a graph~$G$
is called a \emph{clique tree}\index{clique tree} if~$\mathcal{V}$ is the set of all maximal cliques in~$G$ (Figure~\ref{fig:defs-chordal}).
\begin{proposition}[{\cite[Theorem 2, Theorem 3]{Gavril74}}]\label{prop:clique-tree}
  Every chordal graph has a 
  clique tree.  
\end{proposition}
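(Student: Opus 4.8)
The plan is to induct on $|V(G)|$, building the clique tree one vertex at a time following the reverse of a perfect elimination ordering. I would use two classical facts about chordal graphs: every chordal graph with at least one vertex has a \emph{simplicial} vertex $v$, i.e.\ one with $N_G(v)$ a clique (Dirac); and chordality is hereditary, so $G' := G - v$ is again chordal and the induction hypothesis gives it a clique tree $(T',\mathcal{V}')$. The base case $|V(G)| = 1$ is the single-node tree whose bag is $V(G)$.

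For the inductive step, fix a simplicial $v$ and put $M := \{v\} \cup N_G(v)$. First I would establish two elementary facts. (i) $M$ is a maximal clique of $G$: it is a clique since $v$ is simplicial, and maximal since any vertex adjacent to all of $M$ is adjacent to $v$ and hence already lies in $N_G(v) \subseteq M$. (ii) The maximal cliques of $G$ are precisely $\{M\}$ together with the maximal cliques of $G'$ different from $N_G(v)$: a maximal clique of $G'$ stays maximal in $G$ unless it is contained in $N_G(v)$, and a maximal clique of $G'$ contained in the clique $N_G(v)$ must equal $N_G(v)$; conversely $M$ is the only maximal clique of $G$ containing $v$, since every clique of $G$ containing $v$ is contained in $M$. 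Thus $\mathcal{M}(G) = (\mathcal{M}(G') \setminus \{N_G(v)\}) \cup \{M\}$, a disjoint union.

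To build $T$ I would split on whether $N_G(v)$ is a maximal clique of $G'$. If it is, then (the bags of $T'$ being distinct and $\mathcal{V}'$ being exactly the maximal cliques of $G'$) there is a unique node of $T'$ whose bag equals $N_G(v)$; I keep $T'$ and all its edges and simply replace that bag by $M$. If $N_G(v)$ is not a maximal clique of $G'$, I extend it to some maximal clique $K$ of $G'$ — which is a bag $V_{t_0}$ of $T'$ — and attach to $t_0$ a new leaf $t_v$ with bag $M$. By (ii) the resulting bag family is in both cases exactly the set of maximal cliques of $G$, and the bags are pairwise distinct because $M$ is the only one containing $v$. What remains is the verification of the three tree-decomposition axioms: the bags cover $V(G') \cup \{v\} = V(G)$; every edge of $G'$ lies in an inherited bag and every edge $vu$ lies in $M$; and for the interpolation property $v$ occurs in the single bag $M$, so nothing is required of it, while for $u \neq v$ the set of bags containing $u$ is unchanged if $u \notin N_G(v)$, and if $u \in N_G(v)$ it is obtained from the (connected) set of $T'$-bags containing $u$ either without change (relabel case, since $M \supseteq N_G(v)$) or by appending the leaf $t_v$ at the node $t_0$, which already lies in that set because $u \in N_G(v) \subseteq K$; either way it stays connected. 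The degenerate situation $N_G(v) = \emptyset$ (an isolated $v$) always falls in the second case once $|V(G)| \ge 2$ and is handled by the same argument.

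I expect the only genuine content to lie in fact (ii) and the accompanying case split: precisely when $N_G(v)$ is already a maximal clique of $G'$ one must \emph{relabel} a bag of $T'$ rather than \emph{adjoin} a new one, since otherwise the obsolete bag $N_G(v)$ — no longer a maximal clique of $G$ — would survive and violate the defining property of a clique tree. Everything after that is routine bookkeeping on the three axioms.
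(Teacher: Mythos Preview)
The paper does not prove this proposition; it merely states it and cites Gavril (1974). Your inductive argument via simplicial vertices is correct and is one of the standard proofs of this classical fact. Gavril's original route is different---he characterizes chordal graphs as intersection graphs of subtrees of a tree and reads off the clique tree from that representation---so there is nothing in the paper to compare against line by line. Your handling of fact~(ii) and the relabel-versus-adjoin case split is exactly right; that is the one place where a careless argument could leave the obsolete bag $N_G(v)$ in the tree and violate the clique-tree condition.
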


\begin{figure}[h]
\centering
\subfigure[ ]{
\resizebox{.35\textwidth}{!}{
\begin{tikzpicture}
[scale=0.5, label distance=3pt, every node/.style={fill,circle,inner sep=0pt,minimum size=6pt}]

    \node at (5,-1.3)[myblue,label=above :$v_1$,fill=black, circle](v1) {};
    \node at (8,-3.5)[myblue,label=above right:$v_2$,fill=black, circle](v2) {};
    \node at (7,-7)[myblue,label= right:$v_3$,fill=black, circle](v3) {};
    \node at (3,-7)[myblue,label=below:$v_4$,fill=black, circle](v4) {};
    \node at (2,-3.5)[myblue,label=above left:$v_5$,fill=black, circle](v5) {};
    
    \node at (8.5,-5.5)[myblue,label= right:$v_6$,fill=black, circle](v6) {};
    \node at (0,-2.5)[myblue,label= above:$v_7$,fill=black, circle](v7) {};
    \node at (3,-0.5)[myblue,label=above:$v_8$,fill=black, circle](v8) {};
    \node at (0.65,-6)[myblue,label=left:$v_9$,fill=black, circle](v9) {};

    \draw  (v1) -- (v2);
    \draw  (v1) -- (v3);
    \draw  (v1) -- (v4);
    \draw  (v1) -- (v5);
    \draw  (v2) -- (v3);
    \draw  (v2) -- (v4);
    \draw  (v2) -- (v5);
    \draw  (v3) -- (v4);
    \draw  (v3) -- (v5);
    \draw  (v4) -- (v5);
    
    \draw  (v2) -- (v6);
    \draw  (v3) -- (v6);
    \draw  (v1) -- (v7);
    \draw  (v4) -- (v7);
    \draw  (v5) -- (v7);
    \draw  (v1) -- (v8);
    \draw  (v5) -- (v8);
    \draw  (v7) -- (v8);
    \draw  (v4) -- (v9);
    \draw  (v5) -- (v9);
    \draw  (v7) -- (v9);
    
\end{tikzpicture}}
}
\subfigure[ ]{
\resizebox{.45\textwidth}{!}{
\begin{tikzpicture}
[scale=0.5, every node/.style={scale=0.7}]

\node[]  (Vv) at (0,-2) {$ $};

\node[draw=blue,fill=blue!20,circle]  (V1) at (0.5,1) {$v_1,v_2,v_3,v_4,v_5$};
\node[draw=blue,fill=blue!20,circle]  (V2) at (-4,2) {$v_1,v_4,v_5,v_7$};
\node[draw=blue,fill=blue!20,circle]  (V3) at (-4,6) {$v_1,v_5,v_7,v_8$};
\node[draw=blue,fill=blue!20,circle]  (V4) at (-8,4) {$v_4,v_5,v_7,v_9$};
\node[draw=blue,fill=blue!20,circle]  (V5) at (4,3) {$v_2,v_3,v_6$};

\draw (V2) -- (V2);    
\draw (V2) -- (V3);
\draw (V2) -- (V1);
\draw (V2) -- (V4);
\draw (V1) -- (V5);
\end{tikzpicture}}
}
\caption{
(a) A chordal graph~$G$
with~$\omega(G)=5$. 
(b) A clique tree of~$G$. 
}
\label{fig:defs-chordal}
\end{figure}
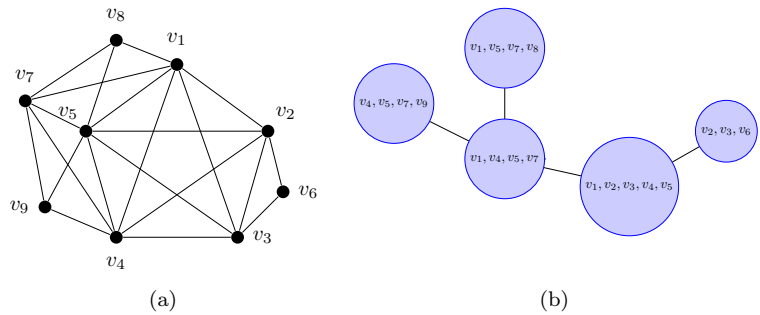

\section{Our main technique}
\label{section:central-lemma}

In this section we introduce the technique for proving our results on
partial~$k$-trees and chordal graphs.
A similar technique and notation was introduced in~\cite{deRezende13}.
We begin by showing a new proof for the well-known Helly Property on trees (see~\cite{Helly23} and \cite{Horn72}).
Given a tree~$T$, a \emph{partial orientation}\index{partial orientation} of~$T$ is a digraph~$T'$
such that~$V(T')=V(T)$ and, if~$uv \in E(T')$, then
$uv \in E(T)$.
Note that not all edges
of~$T$ are present in~$T'$ as arcs.

\begin{lemma}[{\cite[Theorem 4.1]{Horn72}}]\label{lemma:helly-trees}
  Let~$T$ be a tree.
  Let~$\cC$ be a set of pairwise vertex-intersecting subtrees of~$T$.
  There exists a vertex $t \in V(T)$ such that every tree in~$\cC$ contains~$t$.
\end{lemma}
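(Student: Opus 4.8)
The plan is to use the \emph{partial orientation} device introduced just before the statement, orienting enough edges of $T$ to "point toward" the relevant subtrees, and then locate the common vertex as a sink of this orientation. Concretely, for each edge $e = uv \in E(T)$, removing $e$ splits $T$ into two components $T_u \ni u$ and $T_v \ni v$. I claim that every subtree $S \in \cC$ has $V(S)$ contained in one of $V(T_u)$, $V(T_v)$, \emph{or} $S$ uses both endpoints $u,v$ (since $S$ is connected, if it meets both sides it must contain the edge $e$, hence both $u$ and $v$). First I would show that it cannot happen that one subtree $S_1 \in \cC$ lies entirely in $V(T_u)\setminus\{v\}$ while another $S_2 \in \cC$ lies entirely in $V(T_v)\setminus\{u\}$: such $S_1, S_2$ would be vertex-disjoint, contradicting pairwise intersection. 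Therefore, for each edge $uv$, either every subtree meeting the $T_v$-side also contains $u$, or symmetrically every subtree meeting the $T_u$-side also contains $v$ (or both). In the first case I orient the edge as $v \to u$; in the second as $u \to v$ (choosing arbitrarily if both hold). This defines a partial orientation $T'$ in which in fact \emph{every} edge of $T$ receives at least one direction.

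Next I would argue that $T'$ (viewed as an orientation of the tree $T$, keeping just one chosen direction per edge) has a sink, i.e.\ a vertex $t$ with no outgoing arc. Since $T$ is a finite tree and every edge is oriented, following outgoing arcs from any start vertex must terminate (a tree is acyclic, so we never revisit a vertex), and the terminal vertex $t$ is a sink. The key step is then to verify that this sink $t$ lies in every subtree of $\cC$. Take any $S \in \cC$ and suppose $t \notin V(S)$. Let $t'$ be the neighbor of $t$ on the path in $T$ from $t$ toward $S$ (i.e.\ $t' \in \Branch_t(S)$ in the notation of the excerpt, since $S$ is connected and avoids $t$). The edge $tt'$ is oriented; since $t$ is a sink, it is oriented $t' \to t$. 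By construction, $t' \to t$ means that every subtree meeting the component of $T - tt'$ containing $t'$ must contain $t$. But $S$ lies in that component (it avoids $t$ and is connected through $t'$), so $S$ must contain $t$ --- contradiction. Hence $t \in V(S)$ for all $S \in \cC$.

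The main obstacle I anticipate is getting the orientation rule \emph{consistent and well-defined}: I must be careful that for each edge at least one of the two "every subtree on this side also contains the far endpoint" conditions genuinely holds, and this is exactly where pairwise intersection is used (two subtrees strictly separated by the edge would be disjoint). A secondary subtlety is the edge case where $\cC$ contains a single-vertex subtree or where some $S$ contains both $u$ and $v$ --- these are harmless because such an $S$ is then counted as "containing the far endpoint" on whichever side we look, so it never blocks either orientation choice. Once the orientation is seen to be total on $E(T)$ and the sink-chasing argument is in place, the conclusion follows with no further computation. I would also remark that this gives a genuinely self-contained proof, as promised in the surrounding text, without invoking the classical Helly-number machinery.
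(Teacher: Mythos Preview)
Your proof is correct, and like the paper it rests on orienting the edges of $T$, but the two arguments run in opposite directions. The paper orients $t\to t'$ whenever some subtree in $\cC$ \emph{avoids} $t$ and lies on the $t'$-side; assuming no common vertex, every node then has outdegree at least one, and a maximal directed path forces an edge carrying \emph{both} arcs $tt'$ and $t't$, which exhibits two subtrees separated by that edge and hence disjoint --- a contradiction. You instead orient $t'\to t$ whenever every subtree on the $t'$-side \emph{contains} $t$ (the logical complement of the paper's rule, reversed), obtain a total orientation, and locate the common vertex directly as a sink. Your route is arguably cleaner for this lemma in isolation; the paper's formulation is chosen because the ``maximal directed path yields antiparallel arcs'' mechanism is exactly what gets reused, essentially verbatim, in Lemma~\ref{lemma:core-lpt-lct}, which is the engine behind the main theorems. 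One cosmetic remark: writing $V(T_u)\setminus\{v\}$ is redundant since $v\notin T_u$ once the edge $uv$ is removed, but this does no harm.
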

\begin{proof}
  We define a partial orientation~$T'$ of~$T$ as
  follows:~$tt' \in E(T')$ if and only if
  there exists a tree~$P \in \cC$, that does not contain~$t$,
  such that~$V(P)$ and~$t'$ are in the same component of~$T-t$.
  Suppose by contradiction that the lemma is false for~$T$.
  Then every node in~$T'$ has outdegree at least one.
  Let~$tt'$ be the last arc of a maximal directed path in~$T'$.
  As~$T$ is a tree,~$t't$ is also an arc in~$T'$, which implies 
  that there exist two trees~$P$ and~$Q$ in~$\cC$ such 
  that~$V(P)$ and~$t'$ are in the same component of~$T-t$,
  and~$V(Q)$ and~$t$ are in the same component of~$T-t'$.
  But then~$V(P) \cap V(Q) = \emptyset$, a contradiction
  (Figure~\ref{fig:lemma-helly-trees}).
\end{proof}

\begin{figure}[H]
        \centering
        {
\resizebox{.5\textwidth}{!}{\begin{tikzpicture}
[scale=0.5, every node/.style={scale=0.7}]
\node[draw=black,fill=black!20,circle]  (t) at (-1,0) {$t$};
\node[draw=black,fill=black!20,circle]  (tt) at (2,0) {$t'$};

\node[draw=black,circle,scale=1.5]  (v1) at (-3,2) {$ $};
\node[draw=black,circle,scale=1.5]  (v2) at (-5,2) {$ $};
\node[draw=black,circle,scale=1.5]  (v3) at (-4,4) {$ $};

\node[draw=mygreen,fill=mygreen!20,circle,scale=1.5]  (v4) at (-3,-1) {$ $};
\node[draw=mygreen,fill=mygreen!20,circle,scale=1.5]  (v5) at (-1.75,-2.75) {$ $};
\node[draw=mygreen,fill=mygreen!20,circle,scale=1.5]  (v6) at (-5,-2) {$ $};
\node[draw=mygreen,fill=mygreen!20,circle,scale=1.5]  (v9) at (-3.5,-4) {$ $};
\node[draw=mygreen,fill=mygreen!20,circle,scale=1.5]  (v7) at (-6,-4) {$ $};
\node[draw=mygreen,fill=mygreen!20,circle,scale=1.5]  (v8) at (-6,0) {$ $};
\node[draw=black,circle,scale=1.5]  (v10) at (0,-4) {$ $};

\node[draw=black,circle,scale=1.5]  (u1) at (6,2) {$ $};
\node[draw=black,circle,scale=1.5]  (u2) at (4,2) {$ $};
\node[draw=black,circle,scale=1.5]  (u3) at (5,4) {$ $};

\node[draw=black,circle,scale=1.5]  (u5) at (2,2) {$ $};
\node[draw=black,circle,scale=1.5]  (u6) at (1,4) {$ $};

\node[draw=blue,fill=blue!20,circle,scale=1.5]  (u7) at (9,-1) {$ $};
\node[draw=blue,fill=blue!20,circle,scale=1.5]  (u8) at (4.5,-1) {$ $};
\node[draw=blue,fill=blue!20,circle,scale=1.5]  (u9) at (7,-1.5) {$ $};
\node[draw=blue,fill=blue!20,circle,scale=1.5]  (u10) at (4,-3) {$ $};
\node[draw=black,circle,scale=1.5]  (u11) at (2,-2) {$ $};
\node[draw=blue,fill=blue!20,circle,scale=1.5]  (u12) at (6.5,0.5) {$ $};
\node[draw=black,circle,scale=1.5]  (u13) at (8.5,-3) {$ $};
\node[draw=black,circle,scale=1.5]  (u14) at (6,-3.5) {$ $};
\node[draw=black,circle,scale=1.5]  (u15) at (2.5,-4) {$ $};
\node[draw=black,circle,scale=1.5]  (u16) at (0,-2.2) {$ $};

\node[scale=2] at (-7.5,0) {$Q$};
\node[scale=2] at (8.5,0) {$P$};

\draw [->, red, thick] (t) to[bend right=30]  (tt); 
\draw [->, red, thick] (tt) to[bend right=30]  (t);

\draw (t) -- (v1);
\draw (v2) -- (v1);
\draw (v3) -- (v1);

\draw (t) -- (v4);
\draw [mygreen, thick](v6) -- (v4);
\draw[mygreen, thick] (v6) -- (v8);
\draw [mygreen, thick](v6) -- (v7);
\draw[mygreen, thick] (v4) -- (v5);
\draw [mygreen, thick](v9) -- (v5);
\draw (v10) -- (v5);

\draw (tt) -- (u5);
\draw (u5) -- (u6); 

\draw (tt) -- (u2);
\draw (u2) -- (u3); 
\draw (u2) -- (u1); 

\draw (tt) -- (u8);
\draw [blue, thick](u9) -- (u7); 
\draw[blue, thick] (u8) -- (u9);
\draw[blue, thick] (u8) -- (u12);
\draw [blue, thick](u8) -- (u10);
\draw (u14) -- (u9);
\draw (u13) -- (u9);
\draw (u11) -- (u15);
\draw (u11) -- (u16);

\draw (tt) -- (u11);
  \end{tikzpicture}}
}
          \caption{The subtrees~$P$ and~$Q$ and the last arc $tt'$ of a maximal directed path in the proof of
          Lemma~\ref{lemma:helly-trees}.}
\label{fig:lemma-helly-trees}
\end{figure}
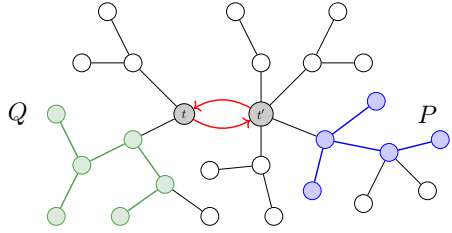

Our main technique for partial $k$-trees and chordal graphs is inspired
in the proof of Lemma \ref{lemma:helly-trees}, but adapted to the tree decomposition of the graph.
This is shown in Lemma~\ref{lemma:core-lpt-lct}. Before it, we show
a useful property.

 \begin{proposition}\label{prop:core-lpt-lct-Branch_t(P)=Branch_t(t')}
  Let~$(T, \calV)$ be a tree decomposition of a graph~$G$.
  If~$C$ is a path or cycle in~$G$ fenced by~$V_t$, for some~$t \in V(T)$,
  then either~$V(C) \subseteq V_t$ or there exists an edge~$tt'\in E(T)$ such that~$\Branch_t(C) = \Branch_t(t')$.
\end{proposition}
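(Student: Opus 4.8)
The statement asks us to show that if a path or cycle $C$ is fenced by $V_t$, then either $V(C)\subseteq V_t$ or all the vertices of $C$ outside $V_t$ lie in a single branch of $T$ at $t$, \emph{and moreover} that the node $t'$ defining that branch can be chosen adjacent to $t$ in $T$. The first part is essentially the observation already made in the text just before Proposition~\ref{prop:core-sep-tt'}: since $C$ is fenced by $V_t$, the set $V(C)\setminus V_t$ is connected in $G-V_t$, and by the third tree-decomposition axiom, each component of $G-V_t$ is contained in a single component of $T-t$; hence $\Branch_t(u)=\Branch_t(v)$ for all $u,v\in V(C)\setminus V_t$, and $\Branch_t(C)$ is well defined. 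So the content to be proved is that, whenever $V(C)\not\subseteq V_t$, this common branch $\Branch_t(C)=\Branch_t(t^\ast)$ for some (possibly far-away) $t^\ast$ actually equals $\Branch_t(t')$ for a \emph{neighbour} $t'$ of $t$ in $T$.

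The plan is as follows. First I would fix a vertex $v\in V(C)\setminus V_t$, which exists by hypothesis, and set $B:=\Branch_t(v)$, the component of $T-t$ containing some node $t^\ast$ with $v\in V_{t^\ast}$. Since $t^\ast\neq t$ and $T$ is a tree, the unique $t$--$t^\ast$ path in $T$ starts with an edge $tt'$ where $t'$ is a neighbour of $t$. I claim $t'\in B$: indeed the whole $t$--$t^\ast$ path minus $t$ lies in a single component of $T-t$ (removing $t$ disconnects $T$ into the branches, and the path from $t'$ to $t^\ast$ avoids $t$), and that component is exactly the one containing $t^\ast$, namely $B$. Hence $\Branch_t(t')=\Branch_t(t^\ast)=B=\Branch_t(C)$, which is precisely what we need. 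Finally I would note that $tt'\in E(T)$ by construction, completing the proof.

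The only subtlety — and the step I would treat most carefully — is making sure the branch notation is coherent: one must check that $\Branch_t(v)$, defined via ``the branch $\Branch_t(t')$ with $v\in V_{t'}$,'' is independent of which bag $V_{t'}$ containing $v$ we pick, which is exactly the content of the third axiom of a tree decomposition (the bags containing $v$ form a connected subtree of $T$, hence lie in one component of $T-t$ once $v\notin V_t$). This is implicitly used when the text writes ``$\Branch_t(C)=\Branch_t(v)$'', so I would either cite that remark or spell out the one-line argument. Beyond that, the proof is just the elementary observation that in a tree the component of $T-t$ containing a given node $t^\ast\neq t$ is the same as the component containing the first edge of the $t$--$t^\ast$ path. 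There is no real obstacle here; the proposition is a bookkeeping lemma whose role is to let later arguments pass from an arbitrary fencing to one witnessed by a tree edge, so the work is entirely in setting up the branch notation correctly.
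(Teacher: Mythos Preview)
Your proposal is correct and follows essentially the same argument as the paper: pick a vertex $v\in V(C)\setminus V_t$, find some bag $V_{t^\ast}$ containing it, and let $t'$ be the neighbour of $t$ on the $t$--$t^\ast$ path in $T$, so that $\Branch_t(C)=\Branch_t(v)=\Branch_t(t^\ast)=\Branch_t(t')$. The paper's proof is more terse and does not spell out the well-definedness of $\Branch_t(v)$ that you discuss, but the route is the same.
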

\begin{proof}
  If~$V(C) \subseteq V_t$, then there is nothing to prove. Otherwise,
  let~${u \in V(C) \setminus V_t}$.
  As~$u \notin V_t$, there exists a bag~$V_{t''}$ that contains~$u$.
  Let~$t'$ be the neighbor of~$t$ in~$T$ such that~$t'$ is in the
  path from~$t$ to~$t''$ in~$T$.
  Then~$\Branch_t(C) = \Branch_t(u) = \Branch_t(t'') = \Branch_t(t')$.
\end{proof}

The next lemma is crucial for our results on 
partial~$k$-trees and chordal graphs.

\begin{lemma}\label{lemma:core-lpt-lct}
  Let~$(T, \calV)$ be a tree decomposition of a graph~$G$.
  For every node~$t$, let~$\cC(t)$ be a set of cycles in~$G$ fenced by~$V_t$
  but not contained in~$G[V_t]$.
  If~$\cC(t) \neq \emptyset$ for every node~$t \in V(T)$,
  then there exists an edge~$tt' \in E(T)$ and two
  cycles~${C \in \cC(t)}$ and~$D \in \cC(t')$ 
  such that~$\Branch_t(C)=\Branch_t(t')$ and~$\Branch_{t'}(D)=\Branch_{t'}(t)$.
\end{lemma}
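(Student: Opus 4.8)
The plan is to adapt the proof of Lemma~\ref{lemma:helly-trees}, replacing the pairwise vertex-intersecting subtrees by the families $\cC(t)$ and using Proposition~\ref{prop:core-lpt-lct-Branch_t(P)=Branch_t(t')} as the link between cycles fenced by a bag and branches of~$T$.

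First I would build a partial orientation~$T'$ of~$T$: put an arc~$tt'$ in~$T'$ precisely when~$tt' \in E(T)$ and some cycle~$C \in \cC(t)$ satisfies~$\Branch_t(C) = \Branch_t(t')$. The key observation is that every node has positive outdegree in~$T'$. Indeed, fix a node~$t$; since~$\cC(t) \neq \emptyset$ we may choose~$C \in \cC(t)$, and since~$C$ is fenced by~$V_t$ but not contained in~$G[V_t]$ we have~$V(C) \not\subseteq V_t$; Proposition~\ref{prop:core-lpt-lct-Branch_t(P)=Branch_t(t')} then provides an edge~$tt' \in E(T)$ with~$\Branch_t(C) = \Branch_t(t')$, that is,~$tt' \in E(T')$.

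Because the bags in a tree decomposition are pairwise distinct,~$T$ (and hence~$T'$) is finite, so~$T'$ has a maximal directed path~$t_0 t_1 \cdots t_k$; as~$t_0$ has positive outdegree, $k \geq 1$. Set~$t := t_{k-1}$ and~$t' := t_k$, so that~$tt' \in E(T')$. I would then argue, exactly as in Lemma~\ref{lemma:helly-trees}, that the reverse arc is also present:~$t'$ has an outgoing arc in~$T'$, which must go to a neighbor of~$t'$ in~$T$; by maximality of the path that neighbor already lies on the path, and since~$T$ contains no cycle it can only be~$t_{k-1} = t$, whence~$t't \in E(T')$.

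It then remains only to unwind the two arcs. From~$tt' \in E(T')$ we obtain a cycle~$C \in \cC(t)$ with~$\Branch_t(C) = \Branch_t(t')$, and from~$t't \in E(T')$ a cycle~$D \in \cC(t')$ with~$\Branch_{t'}(D) = \Branch_{t'}(t)$, which is the statement. I expect the only genuinely delicate step to be showing that the reverse arc~$t't$ belongs to~$T'$, i.e.\ the maximality-plus-acyclicity argument borrowed from the Helly lemma; the remainder is bookkeeping with the definitions of~$T'$, of branches, and with Proposition~\ref{prop:core-lpt-lct-Branch_t(P)=Branch_t(t')}.
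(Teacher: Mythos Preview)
Your proposal is correct and follows exactly the same approach as the paper: build the partial orientation~$T'$ using Proposition~\ref{prop:core-lpt-lct-Branch_t(P)=Branch_t(t')}, note every node has outdegree at least one, take the last arc~$tt'$ of a maximal directed path, and conclude~$t't\in E(T')$ because~$T$ is a tree. Your write-up is in fact slightly more explicit than the paper's about why maximality and acyclicity force the outgoing arc at~$t'$ to point back to~$t$, and about why~$T$ is finite; both are harmless elaborations of the same argument.
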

\begin{proof}
  We define a partial orientation~$T'$ of~$T$ as follows:
 ~$tt' \in E(T')$ if and only if~$tt' \in E(T)$ and
  there exists a cycle~$C \in \cC(t)$
  such that~$\Branch_t(C)=\Branch_t(t')$.
  For every~$t \in V(T)$,
  as~$\cC(t) \neq \emptyset$, there exists a
  cycle~$C$ fenced by~$V_t$ with~$V(C) \nsubseteq V_t$.
  Thus, by Proposition~\ref{prop:core-lpt-lct-Branch_t(P)=Branch_t(t')}, 
  there exists a neighbor~$t'$ of~$t$ in~$T$ such that~$\Branch_t(C)=\Branch_t(t')$.
  Hence every node in~$T'$ has outdegree at least one.
  Let~$tt'$ be the last arc of a maximal directed path in~$T'$.
  As~$T$ is a tree,~$t't$ is also an arc in~$T'$, which implies 
  that there exist two cycles~$C \in \cC(t)$ and~$D\in \cC(t')$
  such that~$\Branch_t(C)=\Branch_t(t')$ and~$\Branch_{t'}(D)=\Branch_{t'}(t)$.
\end{proof}

Immediate results are obtained for partial $k$-trees
and chordal graphs using Lemma \ref{lemma:core-lpt-lct} (see also \cite[Theorem 12.3.9]{Diestel10} and \cite[Proposition 2.6]{RautenbachS14}).
Recall that~$\omega(G)$ is the maximum cardinality of
a clique in $G$.
\begin{corollary} \label{corollary:lpt-lct-leq-tw+1}
Let~$G$ be a 2-connected graph.
Then~${\lct(G) \leq \tw(G)+1}$.
And, if $G$ is chordal, then~$\lct(G) \leq \omega(G)$.
\end{corollary}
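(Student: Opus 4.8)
The plan is to deduce Corollary~\ref{corollary:lpt-lct-leq-tw+1} directly from Lemma~\ref{lemma:core-lpt-lct} by contraposition. Suppose, for contradiction, that $\lct(G) > \tw(G)+1$. By Proposition~\ref{prop:full-tree-dec}, fix a full tree decomposition $(T,\calV)$ of $G$, so every bag $V_t$ has exactly $\tw(G)+1$ vertices. Since $\lct(G) > |V_t|$ for every node $t$, the bag $V_t$ is itself not a transversal of longest cycles: there is a longest cycle $C$ of $G$ with $V(C)\cap V_t=\emptyset$. In particular such a $C$ is fenced by $V_t$ (trivially, since $V_t$ meets neither $C$ nor any branch issue arises — it has no vertex on $C$ at all), and $V(C)\nsubseteq V_t$. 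The first step, then, is to set $\cC(t)$ to be the set of all longest cycles of $G$ that are fenced by $V_t$ but not contained in $G[V_t]$, and to observe that $\cC(t)\neq\emptyset$ for every $t$, using the displayed argument.

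Next I would apply Lemma~\ref{lemma:core-lpt-lct} to this family $\{\cC(t)\}_{t\in V(T)}$. It yields an edge $tt'\in E(T)$ and longest cycles $C\in\cC(t)$, $D\in\cC(t')$ with $\Branch_t(C)=\Branch_t(t')$ and $\Branch_{t'}(D)=\Branch_{t'}(t)$. The point now is that $C$ and $D$ are ``separated'' by $V_t\cap V_{t'}$. Concretely: pick $u\in V(C)\setminus V_t$ and $v\in V(D)\setminus V_{t'}$; then $u\in\Branch_t(t')$ and $v\in\Branch_{t'}(t)$, and moreover $u\notin V_{t'}$ and $v\notin V_t$ (because $C$ being fenced by $V_t$ with $\Branch_t(C)=\Branch_t(t')$ forces $V(C)\setminus V_t$ to lie wholly inside the branch at $t$ containing $t'$, hence every vertex of $C$ lies in $V_t$ or in bags of that branch — but one checks $u\notin V_{t'}$ since otherwise $u\in\Branch_{t'}(t)$, contradicting $u$ being on $C$ which lies in the other branch; symmetrically for $v$). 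Then Proposition~\ref{prop:core-sep-tt'} gives that $V_t\cap V_{t'}$ separates $u$ and $v$ in $G$.

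From here I would derive a contradiction with Proposition~\ref{prop:two-longest-cycles-intersect}. Since $C$ is fenced by $V_t$ and $u\in V(C)\setminus V_t$ lies in $\Branch_t(t')$, every vertex of $V(C)\setminus V_t$ lies in $\Branch_t(t')$, i.e.\ every vertex of $C$ either is in $V_t$ or lies strictly on the $t'$-side; in particular no vertex of $C$ lies on the $t$-side of the separator $V_t\cap V_{t'}$ except possibly vertices of $V_t\setminus V_{t'}$, which are separated from $v$. Symmetrically, every vertex of $D$ lies in $V_{t'}$ or strictly on the $t$-side. Hence $V(C)\cap V(D)\subseteq (V_t)\cap(\text{$t$-side}\cup V_{t'}) \subseteq V_t\cap V_{t'}$ — wait, more carefully: a common vertex $w$ of $C$ and $D$ lies (as a vertex of $C$) in $V_t$ or strictly on the $t'$-side, and (as a vertex of $D$) in $V_{t'}$ or strictly on the $t$-side; a vertex cannot be strictly on both sides, so $w\in V_t\cap V_{t'}$ — but wait, that does not contradict $|V(C)\cap V(D)|\ge 2$ directly. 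The genuine contradiction comes from $V_t\cap V_{t'}$ being a \emph{separator} between a vertex $u$ of $C$ and a vertex $v$ of $D$ together with $|V_t\cap V_{t'}| < \lct(G)$: since $|V_t\cap V_{t'}| = \tw(G) < \lct(G)$ (in a full tree decomposition adjacent bags share exactly $\tw(G)$ vertices), $V_t\cap V_{t'}$ is not a transversal, so some longest cycle $E$ avoids it; but $V_t\cap V_{t'}$ separates $C$ from $D$, forcing every longest cycle to meet $V_t\cap V_{t'}$ — indeed $E$ must meet both the $u$-component and the $v$-component (as it shares $\ge 2$ vertices with each of $C$, $D$ by Proposition~\ref{prop:two-longest-cycles-intersect}, wait $C,D$ are on opposite sides so $E$ meeting $C$ puts a vertex of $E$ on the $t'$-side and $E$ meeting $D$ puts a vertex on the $t$-side), hence $E$ crosses $V_t\cap V_{t'}$, contradiction.

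The chordal statement is the same argument run with a clique tree (Proposition~\ref{prop:clique-tree}) in place of a full tree decomposition: there every bag has size at most $\omega(G)$, so assuming $\lct(G)>\omega(G)$ gives $\cC(t)\neq\emptyset$ for all $t$, and each intersection $V_t\cap V_{t'}$ is a clique of size $<\omega(G)\le\lct(G)$, yielding the same contradiction. The main obstacle I anticipate is the bookkeeping in the middle paragraph — precisely nailing down that the witnesses $u,v$ satisfy the hypotheses $u\notin V_{t'}$, $v\notin V_t$ of Proposition~\ref{prop:core-sep-tt'}, and then converting ``$V_t\cap V_{t'}$ separates one vertex of $C$ from one vertex of $D$'' into ``$V_t\cap V_{t'}$ is a transversal,'' which needs the observation that any longest cycle $E$ must, by Proposition~\ref{prop:two-longest-cycles-intersect}, have vertices on both sides of the separator and therefore hit the separator itself. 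Everything else is a routine unwinding of the branch/fence definitions from Section~\ref{section:basic-concepts-treewidth}.
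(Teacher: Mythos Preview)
Your overall plan---set up $\cC(t)$, invoke Lemma~\ref{lemma:core-lpt-lct}, and contradict Proposition~\ref{prop:two-longest-cycles-intersect}---is exactly the paper's, but you make the execution harder than necessary by letting $\cC(t)$ be \emph{all} longest cycles fenced by $V_t$ and not contained in $G[V_t]$. The paper instead takes $\cC(t)$ to consist only of the longest cycles \emph{disjoint} from $V_t$; nonemptiness follows from $|V_t|\le\tw(G)+1<\lct(G)$ just as you argue, and any such cycle is trivially fenced and not contained in $G[V_t]$. With this smaller $\cC(t)$, Lemma~\ref{lemma:core-lpt-lct} hands back cycles $C,D$ with $V(C)\cap V_t=\emptyset$ and $V(D)\cap V_{t'}=\emptyset$; then every vertex of $C$ lies in a bag of $\Branch_t(t')$ and every vertex of $D$ in a bag of $\Branch_{t'}(t)$, and since these two branches share no node of $T$, the third tree-decomposition axiom gives $V(C)\cap V(D)=\emptyset$ outright, contradicting Proposition~\ref{prop:two-longest-cycles-intersect}. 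No third cycle $E$ is needed. The chordal statement the paper then reads off from the first via $\tw(G)=\omega(G)-1$ for chordal graphs, rather than rerunning the argument on a clique tree.

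Your detour through a third longest cycle $E$ can be patched for the treewidth bound, but the write-up is missing the crucial step: from $|V(E)\cap V(C)|\ge 2$ and $V(E)\cap(V_t\cap V_{t'})=\emptyset$ you need a vertex of $E\cap C$ lying \emph{outside} $V_t$ (hence on the $t'$-side), and this uses $|V_t\setminus V_{t'}|=1$ in a full tree decomposition. In your clique-tree version this breaks: $|V_t\setminus V_{t'}|$ can be large, so nothing you have written rules out $E$ meeting $C$ only inside $V_t\setminus V_{t'}$ (which sits on the $t$-side) while meeting $D$ only inside $V(D)\setminus V_{t'}$ (also on the $t$-side), and then $E$ need not cross $V_t\cap V_{t'}$. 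So as stated, the chordal half of your proposal has a genuine gap; the cleanest repair is simply to shrink $\cC(t)$ to the disjoint cycles as above, after which the whole detour disappears.
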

\begin{proof}
It suffices to prove the first part, as the second part follows
directly by~\cite[Corollary 12.3.12]{Diestel10}.
Suppose by contradiction that ~$\lct(G) > \tw(G) +1 $
and let $(T,\calV)$ be a tree decomposition for $G$ of width $\tw(G)$.
Then, for every $t \in V(T)$, there exists a longest cycle that does
not intersect $V_t$. Thus, by Lemma \ref{lemma:core-lpt-lct}, there
exists an edge $tt' \in E(T)$ and two longest cycles $C$ and~$D$
such that~$C$ is fenced by $V_t$ and does not intersect $V_t$,
$D$ is fenced by $V_{t'}$ and does
not intersect~$V_{t'}$,
$\Branch_t(C)=\Branch_t(t')$ and $\Branch_{t'}(D)=\Branch_{t'}(t)$.
But then ${V(C) \cap V(D) = \emptyset}$, a contradiction.
\end{proof}

The main task in this paper is to improve the bounds
given by Corollary~\ref{corollary:lpt-lct-leq-tw+1}. 
So we have to find a longest cycle fenced by~$V_t$
that satisfies a particular property, which
will make our set~$\cC(t)$ nonempty for every~$t \in V(T)$,
to finally apply Lemma~\ref{lemma:core-lpt-lct}.
The main difficulty is that, when the bounds are diminished, the corresponding cycles can intersect several times the corresponding bag.

\section{Result for Partial $k$-Trees}\label{section:lct-tw}

By
Corollary \ref{corollary:lpt-lct-leq-tw+1}, we have
that $\lct(G)\leq k+1$ when~$G$ is a 2-connected partial~$k$-tree.
In this section we improve this result and show that, in fact,~${\lct(G)\leq k-1}$
(Theorem \ref{theorem:lct<tw-1}).
We begin by showing a useful lemma.

\begin{lemma}\label{lemma:fenceds} \label{lct(G)>V_t-2-or-pulled}
  Let~$G$ be a 2-connected graph.
  Let~$(T, \calV)$ be a full tree decomposition of~$G$.   
  Let~$t \in V(T)$.
  If~${\lct(G)>|V_t|-2}$, then~$V_t$ has an {$\ell$-attractor} with~$\ell \leq 2$.
\end{lemma}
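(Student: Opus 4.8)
The plan is to argue directly. Since $\lct(G)$ is an integer, the hypothesis $\lct(G)>|V_t|-2$ means $\lct(G)\ge|V_t|-1$, and I will use this to produce an $\ell$-attractor for $V_t$ with $\ell\le 2$. Write $S:=V_t$ (so $|S|=\tw(G)+1$) and let $\ell^{*}$ be the minimum of $|V(C)\cap S|$ over all longest cycles $C$ of $G$.

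Two easy reductions come first. If $\ell^{*}\ge 3$ --- i.e.\ every longest cycle meets $S$ in at least three vertices --- then for any $a,b\in S$ the set $S\setminus\{a,b\}$, of size $|S|-2$, meets every longest cycle, so $\lct(G)\le|S|-2<\lct(G)$, a contradiction; hence $\ell^{*}\le 2$. If $\ell^{*}\le 1$, pick a longest cycle $C$ with $|V(C)\cap S|=\ell^{*}$: deleting from $C$ its single vertex of $S$ (if any) leaves a path, so $V(C)\setminus S$ lies in one component of $G-S$, and the same holds for every longest cycle $S$-equivalent to $C$; thus all these cycles are fenced by $S$, so $C$ is an $\ell^{*}$-attractor and we are done. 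So from now on $\ell^{*}=2$. Moreover, since $\lct(G)\ge|S|-1$, for each $2$-subset $\{a,b\}\subseteq S$ the set $S\setminus\{a,b\}$ is not a transversal, so some longest cycle avoids it and, meeting $S$ in at least $\ell^{*}=2$ vertices, meets it exactly in $\{a,b\}$. Hence every $2$-subset of $S$ is the trace on $S$ of some longest cycle.

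It now suffices to find a $2$-subset $A\subseteq S$ such that every longest cycle with trace $A$ on $S$ is fenced by $S$: picking any longest cycle with trace $A$ then gives a $2$-attractor, since all longest cycles $S$-equivalent to it have trace $A$ too. Suppose, for a contradiction, that no such $A$ exists; then for each $2$-subset $A$ of $S$ we may fix a longest cycle $D_{A}$ with trace $A$ that \emph{crosses} $S$. Splitting $D_{A}$ at its two vertices of $S$ produces two paths whose interiors lie in two \emph{distinct} components of $G-S$; write $\beta(A)$ for this unordered pair of components, and observe that, because $D_A$ is a longest cycle, each of the two paths has maximum length among paths with the same two ends whose interior lies in its component. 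By Proposition~\ref{prop:two-longest-cycles-intersect}, any two longest cycles meet in at least two vertices; as two distinct $2$-subsets of $S$ share at most one vertex, $D_{A}$ and $D_{A'}$ must share a vertex outside $S$, which necessarily lies in a component common to $\beta(A)$ and $\beta(A')$. Thus $\{\beta(A):A\subseteq S,\ |A|=2\}$ is a pairwise-intersecting family of $2$-element sets, so it is either a ``star'' (all its members contain a single fixed component $B^{*}$) or is contained in the family of $2$-subsets of some fixed triple of components. In each case I aim to build a transversal of $G$ of size at most $|S|-2$, contradicting $\lct(G)\ge|S|-1$: in the star case, by replacing two suitably chosen vertices of $S$ by a single ``bottleneck'' vertex of $B^{*}$ through which --- using the maximality of the paths above --- every longest cycle whose $S$-trace avoids the other vertices must pass; in the three-component case, by an analogous argument exploiting that the crossing cycles have only three components available. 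This contradiction shows the desired $A$ exists, which completes the proof.

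The only delicate step is the last one: turning this ``Helly for $2$-element sets'' picture of the crossing cycles into an explicit transversal of size $|V_t|-2$, in particular pinning down the bottleneck vertex and checking that it indeed meets \emph{every} longest cycle whose trace on $S$ is confined to the few distinguished vertices. All the earlier steps are routine manipulations of the definitions of fenced, crossing, and $S$-equivalent cycles together with Proposition~\ref{prop:two-longest-cycles-intersect}.
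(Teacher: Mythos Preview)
Your reductions down to the case $\ell^{*}=2$, and the observation that every $2$-subset of $S$ is realised as the trace of some longest cycle, are correct and match the paper's opening moves. The genuine gap is in your final paragraph, which you yourself flag as ``delicate'': you never actually build the promised transversal of size $|S|-2$. Your sketch (``replacing two suitably chosen vertices of $S$ by a single bottleneck vertex of $B^{*}$'') would, if it worked, produce a set of size $|S|-1$, and $\lct(G)\le|S|-1$ does \emph{not} contradict the hypothesis $\lct(G)>|S|-2$. No mechanism is given for locating a bottleneck vertex through which \emph{every} longest cycle with a prescribed trace must pass; ``maximality of the paths'' only tells you each half of $D_A$ is a longest path of its kind, not that all such longest paths share a vertex. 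Finally, and most tellingly, you never use that $(T,\calV)$ is a \emph{full} tree decomposition; this hypothesis is exactly what controls how the components of $G-S$ attach to $S$, and without it the Helly picture on the $\beta(A)$'s does not by itself yield the contradiction.

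The paper's argument exploits fullness directly and avoids the transversal-building step altogether. After the same reductions, it fixes one $2$-crossing longest cycle $C$ at $\{a,b\}$ (necessarily with $ab\notin E(C)$, since otherwise $C$ would be fenced), so that each $ab$-part of $C$ lies in some branch $\Branch_t(t')$, $\Branch_t(t'')$. Fullness gives $|V_t\setminus V_{t'}|=|V_t\setminus V_{t''}|=1$; calling these vertices $x$ and $y$, the assumed $2$-crossing longest cycle $D$ at $\{x,y\}$ must have both its parts in branches different from $\Branch_t(t')$ and $\Branch_t(t'')$ (Proposition~\ref{prop:core-sep-tt'}), so $|V(C)\cap V(D)|\le 1$, contradicting Proposition~\ref{prop:two-longest-cycles-intersect}. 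This is a single, explicit pair of cycles rather than a global transversal construction.
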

\begin{proof}
  As~${\lct(G)>|V_t|-2}$,
  for every subset of~$V_t$ with cardinality~${|V_t|-2}$, there exists a longest cycle that does not contain any vertex of it.
  If any of these cycles intersects~$V_t$ at most once, then there is
  an {$\ell$-attractor} for~$V_t$ with~$\ell \leq 1$ and we are done.
  Hence, every such cycle {2-intersects}~$V_t$.
  So, for every pair of vertices in~$V_t$, there exists a longest cycle that~$2$-intersects~$V_t$ at such pair.
  Suppose by contradiction that~$V_t$ has no {$\ell$-attractor}, with~$\ell \leq 2$.
  Then,
  for every pair of vertices in~$V_t$, there exists a longest cycle that~$2$-crosses~$V_t$ at~such pair.
  Observe that it cannot be the case that all such cycles contain an edge of~$V_t$.
  Hence, there exists a longest cycle~$C$ that 2-crosses~$V_t$, say at~$\{a,b\}$,
  such that~${ab \notin E(C)}$.
  
  Let~$C'$ and~$C''$ be the two~$ab$-parts of~$C$.
  As both~$C'$ and~$C''$ are fenced by~$V_t$ and are not contained in $V_t$, 
  by Proposition~\ref{prop:core-lpt-lct-Branch_t(P)=Branch_t(t')}, there exists two nodes~$t'$ and $t''$, neighbors
  of~$t$ in~$T$, such that~$\Branch_t(C')=\Branch_t(t')$ and~$\Branch_t(C'')=\Branch_t(t'')$,
  where possibly~$t'=t''$.
 As~$(T, \calV)$ is a full tree decomposition, we have~${|V_t \cap V_{t'}|=|V_t \cap V_{t''}|=|V_t|-1}$, so 
 ~$V_t \setminus V_{t'}$ consists on one vertex, say $x$.
  If~$V_t \cap V_{t'} \neq V_t \cap V_{t''}$, let
  $y$ be the vertex in~$V_t \setminus V_{t''}$.
  If~$V_t \cap V_{t'} = V_t \cap V_{t''}$, let
  $y$ be an arbitrary vertex in $V_t$ different from $x$.
  Let~$D$ be a longest cycle that 2-crosses~$V_t$ at~$\{x,y\}$ and let $D'$ and $D''$ be the two $xy$-parts
  of $D$. Note that $\Branch_t(D')$ and $\Branch_t(D'')$
  are different from both $\Branch_t(t')$ and $\Branch_t(t'')$.
  Then, 
  by Proposition \ref{prop:core-sep-tt'},~$C$ and $D$ intersect each other in at most one vertex, a contradiction to the fact that $G$ is 2-connected (Proposition \ref{prop:two-longest-cycles-intersect}).
  \end{proof}

Finally, we prove our main theorem.

\begin{theorem}\label{theorem:lct<tw-1}
For every 2-connected partial~$k$-tree~$G$, we have~$\lct(G) \leq k-1$.
\end{theorem}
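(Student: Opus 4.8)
The plan is to combine Lemma~\ref{lemma:core-lpt-lct} with Lemma~\ref{lemma:fenceds} in essentially the same way that Corollary~\ref{corollary:lpt-lct-leq-tw+1} was derived, but using the sharper information that Lemma~\ref{lemma:fenceds} gives about the attractors. First I would fix a full tree decomposition $(T,\calV)$ of $G$, which exists by Proposition~\ref{prop:full-tree-dec}; since $\tw(G)\le k$ we have $|V_t|=\tw(G)+1\le k+1$ for every node $t$. Arguing by contradiction, suppose $\lct(G)>k-1\ge|V_t|-2$. Then for each node $t$, Lemma~\ref{lemma:fenceds} produces an $\ell$-attractor $C_t$ for $V_t$ with $\ell\le 2$: that is, a longest cycle fenced by $V_t$ with $|V(C_t)\cap V_t|\le 2$, all of whose $V_t$-equivalent longest cycles are also fenced by $V_t$.

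The idea now is to feed these attractors into Lemma~\ref{lemma:core-lpt-lct}. For each node $t$ I would set $\cC(t)$ to be the set of all longest cycles fenced by $V_t$ that $\ell$-intersect $V_t$ with $\ell\le 2$ and are not contained in $G[V_t]$; the existence of the attractor $C_t$ shows $\cC(t)\ne\emptyset$, provided we check that $C_t$ is not contained in $G[V_t]$ — this holds because a longest cycle has length at least $3$ while $|V(C_t)\cap V_t|\le 2$, so $C_t$ must leave $V_t$. Lemma~\ref{lemma:core-lpt-lct} then yields an edge $tt'\in E(T)$ and cycles $C\in\cC(t)$, $D\in\cC(t')$ with $\Branch_t(C)=\Branch_t(t')$ and $\Branch_{t'}(D)=\Branch_{t'}(t)$. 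Picking $u\in V(C)\setminus V_t$ and $v\in V(D)\setminus V_{t'}$, Proposition~\ref{prop:core-sep-tt'} shows that $V_t\cap V_{t'}$ separates $u$ from $v$.

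The crux is then to derive the contradiction $|V(C)\cap V(D)|\le 1$, contradicting Proposition~\ref{prop:two-longest-cycles-intersect}. Here is where the attractor hypothesis and the fullness of the decomposition must be used. Since $(T,\calV)$ is full, $|V_t\cap V_{t'}|=|V_t|-1=|V_{t'}|-1$, so $V_t\setminus V_{t'}$ and $V_{t'}\setminus V_t$ are singletons, say $\{x\}$ and $\{y\}$. Every vertex of $V(C)$ lies either in $V_t\cap V_{t'}$ or on the side of $u$; since $C$ is fenced by $V_t$ and $|V(C)\cap V_t|\le 2$, at most two vertices of $C$ lie in $V_t$, and the vertices of $C$ outside $V_t$ all lie in $\Branch_t(t')$, hence on $u$'s side of the separator $V_t\cap V_{t'}$. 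Symmetrically all of $D$ lies in $V_{t'}$ or on $v$'s side. Therefore $V(C)\cap V(D)\subseteq V_t\cap V_{t'}$, and in fact $V(C)\cap V(D)\subseteq (V(C)\cap V_t)\cap(V(D)\cap V_{t'})\setminus\{x,y\}$, a set of size at most $2$. To cut this down to at most $1$ I expect to need the attractor property together with a short case analysis on whether $x$ or $y$ lies on $C$ or $D$: if $C$ meets both elements of $V_t\cap V_{t'}$ it cannot also meet $x$ but then has only those two common-separator vertices, and one uses that $D$, being fenced by $V_{t'}$, cannot pick up both of them unless a $V_{t'}$-equivalent rerouting of $D$ would fail to be fenced — contradicting that $D$ (or the attractor giving rise to it) witnesses an attractor. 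This last reduction — showing the two intersection sets cannot share two vertices — is the main obstacle, and I anticipate it is handled exactly as in the final paragraph of the proof of Lemma~\ref{lemma:fenceds}: if $|V(C)\cap V(D)|\ge2$ then both common vertices lie in $V_t\cap V_{t'}$ and one can re-route either $C$ or $D$ along the other through the separator to obtain a longest cycle that crosses its bag, contradicting that it was an attractor. Once $|V(C)\cap V(D)|\le1$ is established, Proposition~\ref{prop:two-longest-cycles-intersect} gives the contradiction and the theorem follows.
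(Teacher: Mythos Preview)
Your overall strategy is exactly the paper's: full tree decomposition, Lemma~\ref{lemma:fenceds} to populate $\cC(t)$, Lemma~\ref{lemma:core-lpt-lct} to get the edge $tt'$ and the two cycles, then a rerouting contradiction. However, there is a genuine gap in how you set up $\cC(t)$.

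You define $\cC(t)$ as the set of \emph{all} longest cycles fenced by $V_t$ that $\ell$-intersect $V_t$ with $\ell\le 2$, and use the attractor $C_t$ only to certify $\cC(t)\ne\emptyset$. But Lemma~\ref{lemma:core-lpt-lct} hands you back \emph{some} cycle $C\in\cC(t)$, not necessarily $C_t$ and not necessarily $V_t$-equivalent to it. So when you conclude by saying the rerouted cycle ``contradicts that it was an attractor,'' there is nothing to contradict: your $C$ was never assumed to be an attractor. (When $|V(C)\cap V_t|\le 1$ any such $C$ is automatically an attractor, since a cycle cannot cross $V_t$ with fewer than two vertices in $V_t$; the problem is precisely the case $|V(C)\cap V_t|=2$.) The fix is simply to define $\cC(t)$ as the set of $\ell$-attractors for $V_t$ with $\ell\le 2$, exactly as the paper does; these are fenced by definition, and since $\ell\le 2<3\le|V(C)|$ they are not contained in $G[V_t]$, so Lemma~\ref{lemma:core-lpt-lct} still applies.

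With that correction, the endgame is cleaner than the case analysis you sketch. Let $\{u\}=V_t\setminus V_{t'}$ and $\{w\}=V_{t'}\setminus V_t$. One checks $u\notin V(C)$ (any neighbour of $u$ lies in a bag containing $u$, hence on the $\Branch_{t'}(t)$ side, while $C$'s non-$V_t$ vertices lie in $\Branch_t(t')$) and symmetrically $w\notin V(D)$; thus $V(C)\cap V(D)\subseteq V_t\cap V_{t'}$. By Proposition~\ref{prop:two-longest-cycles-intersect} this intersection has size at least $2$, and since each cycle meets its bag in at most two vertices, in fact $V(C)\cap V_t=V(D)\cap V_{t'}=\{a,b\}$ for some $a,b\in V_t\cap V_{t'}$. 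All four concatenations of an $ab$-part of $C$ with an $ab$-part of $D$ are cycles, forcing each part to have length $L/2$. Choosing an $ab$-part $D'$ of $D$ with $u\notin V(D')$, the cycle $C'\cdot D'$ is a longest cycle that is $V_t$-equivalent to $C$ but $2$-crosses $V_t$, contradicting that $C$ is an attractor. No separate reduction to $|V(C)\cap V(D)|\le 1$ is needed.
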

\begin{proof}

  Let~$(T, \calV)$ be a full tree decomposition of~$G$.   
  For every~$t \in V(T)$, let~$\cC(t)$ be the set of longest cycles in~$G$
  such that, for every~$C \in \cC(t)$,~$C$ 
  is an~{$\ell$-attractor} for~$V_t$ with~$\ell \leq 2$.
  Suppose by contradiction that~$\lct(G) > k-1$.
  Then, as~$|V_t|=k+1$, by Lemma~\ref{lct(G)>V_t-2-or-pulled},~$\cC(t) \neq \emptyset$
  for every~$t \in V(T)$.
  Thus, by Lemma~\ref{lemma:core-lpt-lct}, there exists  
   an edge~$tt' \in E(T)$ and two longest
   cycles~$C$ and~$D$ in~$G$ such
   that~$\Branch_t(C)=\Branch_t(t')$,~$\Branch_{t'}(D)=\Branch_{t'}(t)$,
   ~$C$ is an {$\ell$-attractor} for~$V_t$ with~$\ell \leq 2$,
   and~$D$ is an~$\ell'$-attractor for~$V_{t'}$ with~$\ell' \leq 2$.  
  
  
  It is easy to see, by Proposition~\ref{prop:core-sep-tt'},
  that~$u \notin V(C)$. Analogously, we can conclude that~$w \notin V(D)$
  and therefore~$V(C) \cap V(D) \subseteq V_{t}\cap V_{t'}$. 
  Hence, as~$C$ and~$D$ are given by Lemma~\ref{lct(G)>V_t-2-or-pulled}
  and $G$ is 2-connected, we may assume that~${V(C) \cap V_t} = V(D) \cap V_t = \{a,b\}$.
  Let~$C'$ and~$C''$ be the two~$ab$-parts of~$C$.
  Let~$D'$ and~$D''$ be the two~$ab$-parts of~$D$.
  As~$|V(C) \cap V(D)|=2$, we can conclude that~$|C'|=|C''|=|D'|=|D''|$.
  We may assume that~$w \notin V(D')$.
  Hence,~$C' \cdot D'$ is a longest cycle 
  that 2-crosses~$V_t$ at~$\{a,b\}=V(C) \cap V_t$, a contradiction to the fact that~$C$ is
  a {2-attractor} for~$V_t$.
\end{proof}

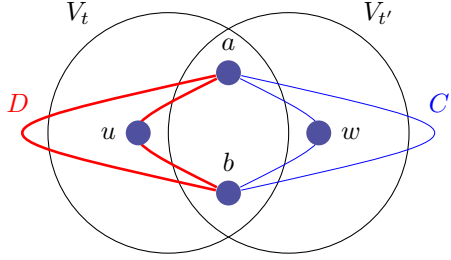
\begin{figure}[ht]
\centering

\begin{tikzpicture}[scale=0.8]
    \node at (1,2) {$V_t$};
    \node at (6,2) {$V_{t'}$};
    
    \node at (3.5,1)[myblue,label=above :$a$,fill=myblue, circle](a) {};
    
    
     \node at (3.5,-1)[myblue,label=above :$b$,fill=myblue, circle](b) {};
      
    \draw (2.5,0) ellipse (2 and 2);
    \draw (4.5,0) ellipse (2 and 2);
    
    \draw [color=blue] [line width=1.0pt, style =ultra thin] (a) .. controls (8, 0) .. (b);
    \draw [color=blue] [line width=1.0pt, style =ultra thin] (a) .. controls (5.5,0) .. (b);
    \node at (7,0.5) [color=blue]{$C$};
    
    \draw [color=red] [line width=1.0pt] (a) .. controls (-1, 0) .. (b);
    \draw [color=red][line width=1.0pt] (a) .. controls (1.5,0) .. (b);
    \node at (0,0.5) [color=red]{$D$};
    
    \node at (2,0)[myblue,label=left:$u$,fill=myblue, circle](u) {};
    
    \node at (5,0)[myblue,label= 
    right:$w$,fill=myblue, circle](w) {};
    \end{tikzpicture}

\caption{Situation in the proof of Theorem~\ref{theorem:lct<tw-1}.}
\label{fig:pairw-inters-1}
\end{figure}

The previous theorem implies the following result.

\begin{corollary}\label{corollary:series-parallel}
All longest cycles intersect in 2-connected partial 2-trees, also known as series-parallel graphs.
\end{corollary}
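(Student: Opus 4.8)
The plan is to simply specialize Theorem~\ref{theorem:lct<tw-1} to the case $k=2$. First I would recall the standard fact that the class of (2-connected) series-parallel graphs coincides with the class of (2-connected) partial $2$-trees, i.e.\ graphs of treewidth at most $2$; this is exactly the characterization $\tw(G)\le k \iff G$ is a partial $k$-tree already invoked in Section~\ref{section:partialktrees-chordalgraphs} (via \cite{Bodlaender98}), applied with $k=2$. So let $G$ be a $2$-connected series-parallel graph, hence a $2$-connected partial $2$-tree.

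Next I would apply Theorem~\ref{theorem:lct<tw-1} with $k=2$, which immediately gives $\lct(G)\le k-1 = 1$. On the other hand, since $G$ is $2$-connected it contains a cycle, so it has a longest cycle, and a longest cycle has at least three vertices; consequently any transversal must contain at least one vertex, i.e.\ $\lct(G)\ge 1$. Combining, $\lct(G)=1$. By the observation in the introduction that $\lct(G)=1$ holds precisely when all longest cycles have a common vertex, we conclude that in $G$ all longest cycles share a vertex, and in particular any two of them intersect.

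There is essentially no obstacle here: the statement is an immediate specialization of Theorem~\ref{theorem:lct<tw-1}. The only point that deserves an explicit (one-line) remark is the identification of the "series-parallel" terminology with "partial $2$-tree," and the trivial lower bound $\lct(G)\ge 1$ coming from $2$-connectedness, which together upgrade the inequality $\lct(G)\le 1$ to the equality $\lct(G)=1$ that yields the common-vertex conclusion.
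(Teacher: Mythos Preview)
Your proposal is correct and follows exactly the paper's approach: the paper presents Corollary~\ref{corollary:series-parallel} as an immediate consequence of Theorem~\ref{theorem:lct<tw-1} with $k=2$, and you have simply spelled out the trivial details (the identification of series-parallel graphs with partial $2$-trees and the observation that $\lct(G)\ge 1$ for a $2$-connected graph, so that $\lct(G)\le 1$ becomes $\lct(G)=1$).
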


Also, we
have the following corollary due to results of
Fomin and Thilikos~\cite{FominT06}, and
Alon, Seymour, and Thomas~\cite{AlonST90}.

\begin{corollary}
  For every 2-connected planar graph~$G$ on~$n$ vertices, we have~${\lct(G) < 3.182 \sqrt{n}}$,
  and for every 2-connected~$K_r$-minor free graph~$G$, we have~${\lct(G) < r^{1.5}\sqrt{n}}$.
\end{corollary}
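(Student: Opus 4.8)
The plan is to feed the known treewidth bounds for planar and $K_r$-minor-free graphs into Theorem~\ref{theorem:lct<tw-1}. The first step is a trivial reformulation: a graph is a partial $k$-tree precisely when $\tw(G)\le k$, so taking $k=\tw(G)$ in Theorem~\ref{theorem:lct<tw-1} gives $\lct(G)\le\tw(G)-1$ for every $2$-connected graph $G$; in particular $\lct(G)<\tw(G)$. After that, everything is a substitution.

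For the planar case I would quote Fomin and Thilikos~\cite{FominT06}: every $n$-vertex planar graph has branchwidth at most $\sqrt{4.5\,n}$, hence treewidth at most $\tfrac32\sqrt{4.5\,n}$, and since $\tfrac32\sqrt{4.5}=3.1819\ldots<3.182$ we get $\tw(G)<3.182\sqrt n$. Combining with the first paragraph, $\lct(G)\le\tw(G)-1<3.182\sqrt n$ for every $2$-connected planar graph $G$.

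For a $2$-connected graph $G$ with no $K_r$ minor I would invoke the separator theorem of Alon, Seymour, and Thomas~\cite{AlonST90}: an $n$-vertex $K_r$-minor-free graph has a balanced separator of order at most $r^{1.5}\sqrt n$, and the standard recursive construction turns such separators into a tree decomposition whose width is bounded in terms of $r^{1.5}\sqrt n$; with the constants arranged as in the cited work this yields $\tw(G)\le r^{1.5}\sqrt n$. Hence $\lct(G)\le\tw(G)-1<r^{1.5}\sqrt n$, which is the claimed bound.

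The proof has no genuine obstacle: it is Theorem~\ref{theorem:lct<tw-1} plus two off-the-shelf width estimates. The only delicate point is the bookkeeping of constants — confirming that the branchwidth-to-treewidth conversion factor is $3/2$ and that $\tfrac32\sqrt{4.5}<3.182$, and making sure the version of the Alon--Seymour--Thomas estimate one cites is phrased so that the resulting treewidth bound is exactly $r^{1.5}\sqrt n$ rather than a constant multiple of it. Note finally that the trivial degenerate cases (where $G$ is too small to be $2$-connected, or $\tw(G)\le 2$) cause no trouble, since the target inequalities are comparisons of real numbers and hold a fortiori there.
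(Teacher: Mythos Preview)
Your proposal is correct and matches the paper's approach exactly: the paper does not spell out a proof but simply states that the corollary follows from Theorem~\ref{theorem:lct<tw-1} together with the treewidth bounds of Fomin--Thilikos~\cite{FominT06} and Alon--Seymour--Thomas~\cite{AlonST90}, which is precisely the argument you give.
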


\section{Result for Chordal Graphs}\label{section:lct-chordal}

In this section, we prove that 
$\lct(G)\leq \max \{1, \omega(G)-3\}$ for every 2-connected chordal graph~$G$
(Theorem \ref{theorem:lct-chordal}).
Throughout this section, we denote by $L:=L(G)$ the length of a longest cycle in $G$. Recall that $\omega(G)$ is
the cardinality of a maximum clique in $G$.

\subsection{Proof of the main theorem}\label{subsection:lct-chordal-main-theorem}

The next lemma conceals the heart of the proof of Theorem~\ref{theorem:lct-chordal}.
The proof of that lemma is presented in Subsection~\ref{subsection:lct-chordal-main-lemma}.

\begin{lemma}\label{lemma:main-lemma}
  Let~$G$ be a 2-connected chordal graph
  such that~${\lct(G) > \max \{1, \omega(G)-3\}}$.
  Let~$k$ be an integer with~$k \ge 2$.
  For each maximal $k$-clique in~$G$, 
  there exists an~{$\ell$-attractor} with~$\ell \leq \min\{3, k-1\}$.
\end{lemma}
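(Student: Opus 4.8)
The plan is to mimic the structure of the proof of Lemma~\ref{lct(G)>V_t-2-or-pulled}, but now working inside a clique tree of~$G$ (Proposition~\ref{prop:clique-tree}) rather than a full tree decomposition, and to handle the two regimes $k=2,3$ and $k\ge 4$ together via the bound $\ell\le\min\{3,k-1\}$. Fix a clique tree $(T,\calV)$ and a node $t$ whose bag $V_t$ is a maximal $k$-clique $K$. Since $\lct(G)>\max\{1,\omega(G)-3\}\ge\omega(G)-3\ge k-3$, for every subset $X\subseteq K$ with $|X|=k-3$ there is a longest cycle avoiding $X$; in particular every such cycle intersects $K$ in at most $3$ vertices (as $K$ is a clique, no longest cycle can contain $4$ of its vertices — any cycle through $4$ vertices of a clique has a shorter-or-equal length rerouting, and more carefully a longest cycle meeting a clique in $\ge 4$ points can be shortcut, contradicting maximality; this is where the ``chordal/clique'' hypothesis does real work). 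So every longest cycle we produce $\ell$-intersects $K$ with $\ell\le\min\{3,k\}$, and I will argue it can be taken with $\ell\le\min\{3,k-1\}$ and fenced.

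First I would dispose of the easy intersection counts. If some longest cycle avoiding a fixed $(k-3)$-subset meets $K$ in at most $1$ vertex, it is automatically fenced by $K$ (a clique separator: $K$ cannot ``cross'' a cycle meeting it $\le 1$ time), giving an $\ell$-attractor with $\ell\le 1$ unless some $K$-equivalent longest cycle crosses $K$ — and I would rule that out by the standard Proposition~\ref{prop:core-sep-tt'} plus Proposition~\ref{prop:two-longest-cycles-intersect} argument. So assume every relevant longest cycle meets $K$ in exactly $2$ or $3$ vertices. Suppose, for contradiction, that $K$ has no $\ell$-attractor with $\ell\le\min\{3,k-1\}$. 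Then for each admissible set $S\subseteq K$ of the right size ($2$ or $3$, and of size $\le k-1$ so that some vertex of $K$ is still free) there is a longest cycle that \emph{crosses} $K$ at $S$, i.e.\ its $S$-parts lie in at least two distinct branches of $T$ at $t$. Moreover, as in Lemma~\ref{lct(G)>V_t-2-or-pulled}, we may take such a crossing cycle $C$ whose $S$-parts use no edge of $K$ (otherwise reroute through the clique to shrink the number of branches touched, using that $K$ is complete, ultimately forcing a fenced cycle).

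The main obstacle — and the reason this lemma is harder than Lemma~\ref{lct(G)>V_t-2-or-pulled} — is the bookkeeping when $|S|=3$: a $3$-crossing cycle $C$ has three $abc$-parts $C_{ab},C_{bc},C_{ac}$ which may be distributed among up to three branches of $T$ at $t$, so I cannot just pick one ``avoided'' branch. The strategy is: from the non-attractor hypothesis, for \emph{every} admissible $S$ pick a crossing cycle $C_S$, look at the set $\mathcal{B}(C_S)\subseteq$ branches of $T$ at $t$ occupied by its parts (size $2$ or $3$), and use a counting/pigeonhole argument over the $\binom{k}{2}$ and $\binom{k}{3}$ choices of $S$ to find two cycles $C=C_S$ and $D=C_{S'}$ whose occupied-branch sets are disjoint, or at least such that one cycle's parts avoid some branch that the other's parts use. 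Then Proposition~\ref{prop:core-sep-tt'} forces $|V(C)\cap V(D)|\le |V_t\cap V_{t'}|$ to be small — in fact one shows $V(C)\cap V(D)$ is confined to $K$ minus some vertex and is too small, contradicting Proposition~\ref{prop:two-longest-cycles-intersect}. Concretely, since there are $k\ge 3$ vertices in $K$ but the crossing cycles each occupy at most $3$ branches, when $k\ge 4$ one can choose $S,S'$ so that $C$ avoids a branch meeting $D$ and vice versa; when $k=3$ the clause $\ell\le k-1=2$ restricts us to $|S|=2$ and the argument collapses exactly to that of Lemma~\ref{lct(G)>V_t-2-or-pulled}; when $k\ge 4$ and we are stuck with $3$-crossings, the extra freedom in choosing the free vertex of $K$ (as in the $x,y$ choice in Lemma~\ref{lct(G)>V_t-2-or-pulled}) is what lets me separate $C$ from $D$. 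Throughout, the ``fenced'' half of ``$\ell$-attractor'' is obtained by first producing a fenced $\ell$-intersecting longest cycle from the above contradictions, and then noting that if some $K$-equivalent longest cycle crossed $K$ we could rerun the separation argument to contradict Proposition~\ref{prop:two-longest-cycles-intersect}; hence the fenced cycle we found is genuinely an attractor.
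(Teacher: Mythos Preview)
Your proposal has a genuine gap in the endgame. You aim to reach a contradiction via Proposition~\ref{prop:two-longest-cycles-intersect} by producing two longest cycles $C$ and $D$ with $|V(C)\cap V(D)|\le 1$, arguing that their ``occupied branch sets'' at $t$ can be made disjoint by pigeonhole. But even when the branches are disjoint, $C$ and $D$ can still share two or three vertices \emph{inside} $K$: if $C$ $3$-crosses $K$ at $\{a,b,c\}$ and $D$ $3$-crosses $K$ at $\{a,b,d\}$ (which is forced when $k=4$, the hardest case), then $|V(C)\cap V(D)|\ge 2$ regardless of branches, and no contradiction follows. Your pigeonhole on $\binom{k}{3}$ triangles also gives no control over which branches the parts of $C_S$ occupy, so you cannot steer $S'$ to avoid them; and the claim that the $k=3$ case ``collapses exactly'' to Lemma~\ref{lct(G)>V_t-2-or-pulled} is unjustified, since a clique tree need not satisfy $|V_t\cap V_{t'}|=|V_t|-1$. (Incidentally, your parenthetical that a longest cycle cannot meet a clique in four vertices is false---think of a Hamiltonian cycle in a graph containing a large clique.)

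The paper's proof uses a different mechanism: instead of separating $C$ from $D$, it \emph{recombines} them. Because $K$ is a clique, any two vertices of $K$ are adjacent, so one can splice an $ab$-part of $C$ with a $cd$-part of $D$ via single clique edges (e.g.\ $C'\cdot bx\cdot D'\cdot ya$) whenever the relevant parts lie in different components of $G-K$. Summing the lengths of two such spliced cycles shows one exceeds $L$, a contradiction. The proof splits into the case where some longest cycle $2$-intersects $K$ (then choose $c,d\in K$ according to which neighbouring bags the two $ab$-parts enter, and use $\lct(G)>k-3$ to get a second cycle $D$ through $\{b,c,d\}$) and the case where every longest cycle meets $K$ in at least three vertices. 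In the latter case the paper introduces the notion of a \emph{breaking vertex} for a $3$-crossing cycle, uses a pigeonhole on breaking vertices (not on branches) to find two triangles through a common vertex $x$ with compatible crossing cycles, and then does a careful case analysis on the $\sim_K$ relations between their three parts, each time exhibiting two spliced cycles whose lengths sum to more than $2L$. The clique edges are what make these splicings possible; this is where chordality does real work, not in bounding how many clique vertices a longest cycle can meet.
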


Using this lemma, we derive the main result of this section.

\begin{theorem}\label{theorem:lct-chordal}
For every 2-connected chordal graph~$G$,~$\lct(G)\leq \max \{1, 	\omega(G){-3}\}.$
\end{theorem}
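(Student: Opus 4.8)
The plan is to deduce Theorem~\ref{theorem:lct-chordal} from Lemma~\ref{lemma:main-lemma} by mimicking the structure of the proof of Theorem~\ref{theorem:lct<tw-1}, but working in a clique tree rather than a full tree decomposition. First I would suppose for contradiction that $\lct(G) > \max\{1,\omega(G)-3\}$, so that Lemma~\ref{lemma:main-lemma} applies to every maximal clique of $G$. By Proposition~\ref{prop:clique-tree}, $G$ has a clique tree $(T,\calV)$ whose bags are exactly the maximal cliques of $G$. For a node $t$ with bag $V_t$ a maximal $k$-clique (so $k=|V_t|\ge 2$; note $k\ge 2$ holds because $G$ is 2-connected, hence has no cut vertex and in particular no maximal clique of size $1$), Lemma~\ref{lemma:main-lemma} gives an $\ell$-attractor $C_t$ for $V_t$ with $\ell\le\min\{3,k-1\}$. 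In particular $C_t$ is a longest cycle fenced by $V_t$; moreover $\ell\le k-1<|V_t|$, so $V(C_t)\not\subseteq V_t$ would need to be checked, which I address next.

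The key point to verify is that each attractor $C_t$ is genuinely a cycle of $\cC(t)$ in the sense required by Lemma~\ref{lemma:core-lpt-lct}, i.e.\ fenced by $V_t$ and \emph{not} contained in $G[V_t]$. It is fenced by definition of attractor. For the second part: since $V_t$ is a clique, $G[V_t]$ is complete, so the only cycle contained in $G[V_t]$ using $\ge 3$ vertices is a triangle when $k=3$, and none when $k=2$; in general $C_t$ has length $L\ge 3$ and intersects $V_t$ in only $\ell\le 3$ vertices, so if $C_t\subseteq G[V_t]$ then $L=\ell\le 3$ and $G$ would be very small. The honest way to handle this is: if $L\le \ell$ then $C_t$ lies in the clique $V_t$, but then every longest cycle is a triangle inside a single clique, and for a 2-connected graph this forces $G=K_3$ (or a small case) where $\lct(G)=1\le\max\{1,\omega-3\}$ fails the assumption; so we may assume $L>3\ge\ell$, hence $V(C_t)\not\subseteq V_t$. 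Thus $\cC(t):=\{\text{$\ell$-attractors for }V_t,\ \ell\le\min\{3,k-1\}\}\neq\emptyset$ for every node $t$.

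Now apply Lemma~\ref{lemma:core-lpt-lct} to the family $\{\cC(t)\}_{t\in V(T)}$: there is an edge $tt'\in E(T)$ and cycles $C\in\cC(t)$, $D\in\cC(t')$ with $\Branch_t(C)=\Branch_t(t')$ and $\Branch_{t'}(D)=\Branch_{t'}(t)$. Pick $u\in V(C)\setminus V_t$ and $w\in V(D)\setminus V_{t'}$; then $u\in\Branch_t(t')$ and $w\in\Branch_{t'}(t)$, so by Proposition~\ref{prop:core-sep-tt'}, $V_t\cap V_{t'}$ separates $u$ and $w$. Since $C$ is fenced by $V_t$ it lies (apart from $V_t$) in $\Branch_t(t')$, and similarly $D$ lies (apart from $V_{t'}$) in $\Branch_{t'}(t)$, so $V(C)\cap V(D)\subseteq V_t\cap V_{t'}$. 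As in the proof of Theorem~\ref{theorem:lct<tw-1}, I then argue: $C$ and $D$ are both longest cycles and, being $\ell$-attractors with $\ell\le 3$, intersect $V_t$ (resp.\ $V_{t'}$) in at most $3$ vertices. Since $V_t\cap V_{t'}$ is a clique, any two of its vertices are adjacent; if $|V(C)\cap V(D)|\ge 2$, say $\{a,b\}\subseteq V(C)\cap V(D)$, I would splice the $ab$-parts of $C$ and $D$ to produce a longest cycle that $2$-crosses $V_t$ at $\{a,b\}$ (placing the part through $\Branch_{t'}$ together with the part of $C$ outside $\Branch_t(t')$), contradicting that $C$ is an attractor; hence $|V(C)\cap V(D)|\le 1$, contradicting Proposition~\ref{prop:two-longest-cycles-intersect}.

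The main obstacle I anticipate is the splicing step: because the attractors may now meet their bags in up to three vertices rather than two, I must be careful to choose the right pair $\{a,b\}$ in $V_t\cap V_{t'}$ and the right parts of $C$ and $D$ so that the glued cycle is both (a) of the same length $L$ — which needs the equal-length consequence of $|V(C)\cap V(D)|$ being as large as possible, exactly as in the $2$-trees proof — and (b) genuinely crosses $V_t$, so that it witnesses non-attractor behaviour. Handling the case $|V(C)\cap V(D)|=3$ versus $=2$, and the degenerate small-graph cases ($\omega(G)\le 4$, where the bound $\max\{1,\omega-3\}$ is $1$ and one must show all longest cycles share a vertex), is where the real care lies; all of this delicate part is precisely what Lemma~\ref{lemma:main-lemma} is engineered to support, so in the theorem's proof itself I expect only the clean separator-and-splice argument sketched above.
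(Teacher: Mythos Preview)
Your high-level strategy matches the paper's: populate $\cC(t)$ via Lemma~\ref{lemma:main-lemma}, invoke Lemma~\ref{lemma:core-lpt-lct}, then splice pieces of $C$ and $D$ to contradict the attractor property. However, there is a genuine missing step that the paper carries out and that your sketch does not even allude to.

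Before the splicing, the paper proves that $|V_t\cap V_{t'}|\ge\omega(G)-1$, hence $|V_t|=|V_{t'}|=\omega(G)$ and $V_t\setminus V_{t'}=\{u\}$, $V_{t'}\setminus V_t=\{w\}$ are singletons. The argument is short but essential: if $|V_t\cap V_{t'}|\le\omega(G)-2$, then since $\lct(G)>\omega(G)-3$ there is a longest cycle meeting $V_t\cap V_{t'}$ in at most one vertex; but that cycle must meet each of $C,D$ in two vertices, and $C,D$ lie on opposite sides of the separator $V_t\cap V_{t'}$, a contradiction. You omit this entirely, and without it the splicing breaks down. The reason is that when you graft an $ab$-part $D'$ of $D$ onto $C$, the resulting cycle is $V_t$-equivalent to $C$ only if $D'$ meets $V_t$ in no new vertices; since $D'$ lives in $\Branch_{t'}(t)$, it may pass through vertices of $V_t\setminus V_{t'}$. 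With $V_t\setminus V_{t'}=\{u\}$ one can always choose a part of $D$ avoiding $u$; with $|V_t\setminus V_{t'}|\ge 2$ this is no longer guaranteed, so the spliced cycle need not be $V_t$-equivalent to $C$ and you get no contradiction to $C$ being an attractor.

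Two smaller points. First, the paper's way of showing $C\not\subseteq G[V_t]$ is stronger than yours: since $V_t$ is a clique and $C$ is longest with $|V(C)\cap V_t|\le|V_t|-1$, $C$ has \emph{no edge} inside $V_t$ (otherwise detour through a missing clique vertex). This ``no edge'' fact is what forces $V(C)\cap V_t\subseteq V_{t'}$ and hence $u\notin V(C)$, which you also need. Second, your description of the splice (``the part of $C$ outside $\Branch_t(t')$'') is not well-defined, since all of $C$ outside $V_t$ lies in $\Branch_t(t')$; and the spliced cycle must $\ell$-cross $V_t$ at exactly $V(C)\cap V_t$ (which may have size $3$, not $2$), so the three cases $(\ell,\ell')\in\{(3,3),(2,2),(3,2)\}$ really do require separate handling, as in the paper.
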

\begin{proof}
  Let~$(T, \calV)$ be a clique tree of~$G$, which exists by Proposition \ref{prop:clique-tree}.
  If some clique in $\calV$ has cardinality one, then $|V(G)|=1$ and we are done.
  Thus, every clique in~$\calV$ has cardinality at least two.
  For every~${t \in V(T)}$, let~${\cC(t)}$ be the set of longest cycles in~$G$
  which are~{$\ell$-attractors} for~$V_t$, with~${\ell \leq \min\{3, |V_t|-1\}}$.
  Suppose by contradiction that~${\lct(G) > \max \{1, \omega(G)-3\}}$.
  Then, by Lemma~\ref{lemma:main-lemma}, ${\cC(t) \neq \emptyset}$ for every~${t \in V(T)}$ .
  Observe that, as~$V_{t}$ is a clique, any cycle in~$\cC(t)$
  has no edges in~$G[V_t]$; indeed, 
  otherwise, such cycle will contain all vertices of~$V_{t}$, 
  a contradiction to the fact that ${\ell \leq \min\{3, |V_t|-1\}}$.
  This implies that, for any~$t \in V(T)$,
  no cycle in~$\cC(t)$ is contained in~$G[V_t]$.
  Thus, by Lemma~\ref{lemma:core-lpt-lct}, there exists  
  an edge ${tt' \in E(T)}$ and two cycles
  ${C \in \cC(t)}$ and~${D \in \cC(t')}$
  such that
  ${\Branch_t(C)=\Branch_t(t')}$,~${\Branch_{t'}(D)=\Branch_{t'}(t)}$,
  $C$ is an~{$\ell$-attractor} for~$V_t$ with~${\ell \le \min\{3, |V_t|-1\}}$,
  and~$D$ is an~$\ell'$-attractor for~$V_{t'}$ with~${\ell' \le \min\{3, |V_{t'}|-1\}}$.
  
  
  Suppose for a moment that~${|V_{t}\cap V_{t'}| \leq \omega(G)-2}$.
  As~$\lct(G) > \max \{1, \omega(G)-3\}$, there exists a longest cycle
  that contains at most one vertex of~${V_{t}\cap V_{t'}}$. As such cycle must intersect both $C$
  and $D$ twice, this is a contradiction to Proposition \ref{prop:core-sep-tt'}.
  Hence~${|V_{t}\cap V_{t'}| \geq \omega(G)-1}$.
  Moreover, as both~$V_{t}$ and~$V_{t'}$ are maximal and different,
  we conclude that~${|V_{t}|=|V_{t'}|=\omega(G)}$.
  Let~${\{u\} = V_{t} \setminus V_{t'}}$ and~${\{w\} = V_{t'} \setminus V_t}$.
  It is easy to see that~$u \notin V(C)$ and~$w \notin V(D)$. 
  As~$G$ is 2-connected, $|V(C) \cap V(D)| \geq 2$, so we have the following cases.
  
 \setcounter{case}{0}  
  
  \begin{case}
  Both~$C$ and~$D$ 3-intersect~$V_t$ and~$V_{t'}$, respectively.
  
  Let~$V(C) \cap V_t=\{a,b,c\}$.
  Consider the case when~$V(D) \cap V_{t'}=\{a,b,c\}$.
  We may assume, without loss of generality,
  that~$w \notin C_{ab}$ and~$u \notin D_{ab}$.
  As~$(C-C_{ab}) \cdot D_{ab}$ and~$(D-D_{ab}) \cdot C_{ab}$
  are cycles,~$|C_{ab}|=|D_{ab}|$ and both are longest cycles.
  Hence,~$(C-C_{ab}) \cdot D_{ab}$ is a longest cycle that 3-crosses~$V_t$ 
  at~$V(C) \cap V_t$, a contradiction to the fact that~$C$ is an attractor for~$V_t$
  (Figure~\ref{fig:lct-chordal}(a)).
  Now suppose that~$V(D) \cap V_{t'}=\{b,c,d\}$, with~$d \neq a$.
  Then,~$C_{bc}\cdot C_{ca}\cdot ad \cdot D_{db}$
  and~$D_{bc}\cdot D_{cd}\cdot da \cdot C_{ab}$
  are cycles, one of them longer than~$L$, a contradiction.
  \end{case}
  
  \begin{case}
  Both~$C$ and~$D$ 2-intersect~$V_t$ and~$V_{t'}$, respectively.
  
  Let $\{a,b\}={V(C) \cap V_t=V(D) \cap V_{t'}}$.
  Let~$C'$ and~$C''$ be the two~$ab$-parts of~$C$.
  Let~$D'$ and~$D''$ be the two~$ab$-parts of~$D$.
  As~${C' \cdot D'}$, ${C' \cdot D''}$, ${C'' \cdot D'}$ 
  and~${C'' \cdot D''}$ are cycles,~${|C'|=|C''|=|D'|=|D''|=L/2}$.
  Without loss of generality, we may assume that~${u \notin V(D')}$.
  Hence,~${D' \cdot C'}$ is a longest cycle that 2-crosses~$V_t$ at 
 ~${V(C) \cap V_t}$, a contradiction to the fact that~$C$ is an attractor for~$V_t$
  (Figure~\ref{fig:lct-chordal}(b)).
  \end{case}
  
  \begin{case}
  $C$ {3-intersects}~$V_t$ and~$D$ {2-intersects}~$V_{t'}$.

  We may assume that~${V(C) \cap V_t=\{a,b,c\}}$
  and that~${V(D) \cap V_{t'}=\{a,b\}}$.
  Let~$D'$ and~$D''$ be the two~$ab$-parts of~$D$.
  Without loss of generality, we may assume that~${u \notin V(D')}$.
   As~$(C-C_{ab}) \cdot D'$ and~$(D-D') \cdot C_{ab}$
  are cycles,~$|C_{ab}|=|D'|$ and both are longest cycles.
  Hence,~${(C-C_{ab}) \cdot D'}$ is a longest cycle that 3-crosses~$V_t$ at~${V(C) \cap V_t}$, a contradiction to the fact that~$C$ is an attractor for~$V_t$
  (Figure~\ref{fig:lct-chordal}(c)).
  \end{case}
  This concludes the proof.	
\end{proof}

\begin{figure}[hbt]
\centering

\subfigure[ ]{
\begin{tikzpicture}[scale=0.75,rotate=-90, every node/.style={scale=0.9}]
    
    \node at (1,0) {$V_t$};
    \node at (8,0) {$V_{t'}$};

    \node at (4.5,1)[myblue,label=right:$a$,fill=myblue, circle](a) {};
    
    \node at (4.5,0)[myblue,label=above:$b$,fill=myblue, circle](b) {};
    
     \node at (4.5,-1)[myblue,label=left:$c$,fill=myblue, circle](c) {};
      
    \draw (3.5,0) ellipse (2 and 2);
    \draw (5.5,0) ellipse (2 and 2);
    
    \draw [color=blue] [line width=1.0pt, style =ultra thin] (a) .. controls (9, 2) .. (b);
    \draw [color=blue] [line width=1.0pt, style =ultra thin] (b) .. controls (9, -2) .. (c);
    \draw [color=blue] [line width=1.0pt, style =ultra thin] (a) .. controls (7.15,0) .. (c);
    \node at (7.5,2) [color=blue]{$C$};
    
    \draw [color=red] [line width=1.0pt] (a) .. controls (0, 2) .. (b);
    \draw [color=red] [line width=1.0pt] (b) .. controls (0, -2) .. (c);
    \draw [color=red][line width=1.0pt] (a) .. controls (1.85,0) .. (c);
    \node at (1.5,2) [color=red]{$D$};
    
    \node at (2.5,0)[myblue,label=above left:$u$,fill=myblue, circle](u) {};
    
    \node at (6.5,0)[myblue,label=below:$w$,fill=myblue, circle](w) {};
    \path	
    
    (a) edge (b)
    (a) edge [bend left] (c)
    (b) edge (c)
    ;

    \end{tikzpicture}

}\subfigure[ ]{
  \begin{tikzpicture}[scale=0.75,rotate=-90, every node/.style={scale=0.9}]

    \node at (1,2) {$V_t$};
    \node at (6,2) {$V_{t'}$};
    
    \node at (3.5,1)[myblue,label=right :$a$,fill=myblue, circle](a) {};
    
     \node at (3.5,-1)[myblue,label=left :$b$,fill=myblue, circle](b) {};
      
    \draw (2.5,0) ellipse (2 and 2);
    \draw (4.5,0) ellipse (2 and 2);
    
    \draw [color=blue] [line width=1.0pt, style =ultra thin] (a) .. controls (8, 0) .. (b);
    \draw [color=blue] [line width=1.0pt, style =ultra thin] (a) .. controls (5.5,0) .. (b);
    \node at (7,0.5) [color=blue]{$C$};
    
    \draw [color=red] [line width=1.0pt] (a) .. controls (-1, 0) .. (b);
    \draw [color=red][line width=1.0pt] (a) .. controls (1.5,0) .. (b);
    \node at (0,0.5) [color=red]{$D$};
    
    \node at (2,0)[myblue,label= above:$u$,fill=myblue, circle](u) {};
    \node at (5,0)[myblue,label= 
    below:$w$,fill=myblue, circle](w) {};
    
    \path	
    
    (a) edge (b)
    ;
    
    \end{tikzpicture}
}
\subfigure[ ]{
 \begin{tikzpicture}[scale=0.75,rotate=-90, every node/.style={scale=0.9}]
    \node at (1,0) {$V_t$};
    \node at (8,0) {$V_{t'}$};
    \node at (4.5,1)[myblue,label=right :$a$,fill=myblue, circle](a) {};
    \node at (4.5,0)[myblue,label=below :$b$,fill=myblue, circle](b) {};
     \node at (4.5,-1)[myblue,label=left :$c$,fill=myblue, circle](c) {};
      
    \draw (3.5,0) ellipse (2 and 2);
    \draw (5.5,0) ellipse (2 and 2);
    
    \draw [color=blue] [line width=1.0pt, style =ultra thin] (a) .. controls (9, 2) .. (b);
    \draw [color=blue] [line width=1.0pt, style =ultra thin] (b) .. controls (9, -2) .. (c);
    \draw [color=blue] [line width=1.0pt, style =ultra thin] (a) .. controls (7.15,0) .. (c);
    \node at (7.5,2) [color=blue]{$C$};
    
    \draw [color=red] [line width=1.0pt] (a) .. controls (0, 2) .. (b);
    \draw [color=red][line width=1.0pt] (a) .. controls (1.85,-0.15) .. (b);
    \node at (1.5,2) [color=red]{$D$};
    
    \node at (2.5,0)[myblue,label=above left:$u$,fill=myblue, circle](u) {};
    
    \node at (6.5,0)[myblue,label=below:$w$,fill=myblue, circle](w) {};
    
    \path	
    
    (a) edge (b)
    (a) edge [bend right] (c)
    (b) edge (c)
    ;
    
    \end{tikzpicture}

}
\caption{Cases in the proof of Theorem~\ref{theorem:lct-chordal}.}
\label{fig:lct-chordal}
\end{figure}
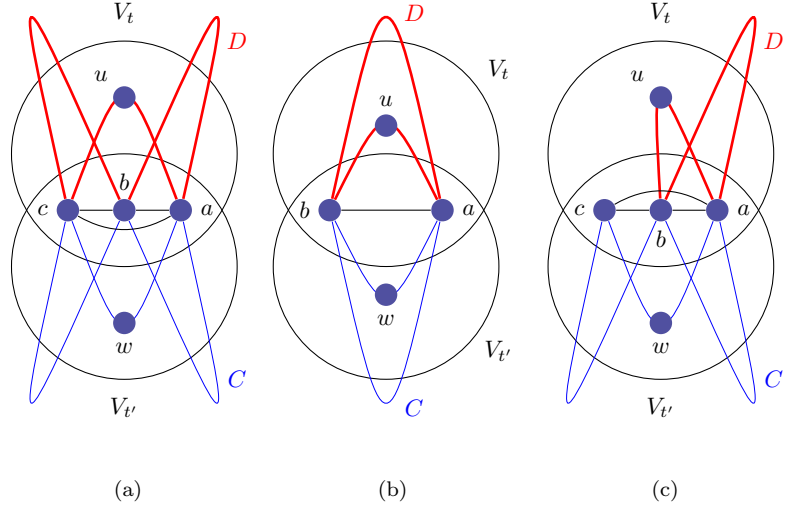

Note that~$k$-trees are chordal~\cite[Theorem 4.1]{Golumbic04}
and their maximum cliques have cardinality at most $k+1$.
Also, every planar graph is $K_5$-free.
Hence, we have the following corollary.

\begin{corollary}\label{corollary:lct-corol-chordal-ktree}
  If~$G$ is a~$k$-tree, with~$k>2$, then~$\lct(G)\leq k-2$.
  Moreover, all longest cycles intersect in 2-trees, 3-trees, and in 2-connected chordal planar graphs.
\end{corollary}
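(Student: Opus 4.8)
The plan is to read the whole statement off Theorem~\ref{theorem:lct-chordal}, using only the two structural facts recorded just before the statement: a $k$-tree is chordal with $\omega(G)\le k+1$, and a planar graph, containing no $K_5$, has $\omega(G)\le 4$. Recall also that $\lct$ is defined only for $2$-connected graphs, which is harmless here since every $k$-tree with $k\ge 2$ on at least three vertices is $2$-connected (the base clique $K_k$ is $2$-connected for $k\ge 3$, and attaching a simplicial vertex of degree $k\ge 2$ preserves $2$-connectivity), while a graph with no cycle has no longest cycle and the statements are then vacuous.

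First I would settle the bound for $k$-trees. Let $G$ be a $k$-tree with $k>2$. Since $G$ is chordal with $\omega(G)\le k+1$, Theorem~\ref{theorem:lct-chordal} gives $\lct(G)\le\max\{1,\omega(G)-3\}\le\max\{1,(k+1)-3\}=\max\{1,k-2\}=k-2$, the final equality holding because $k\ge 3$ forces $k-2\ge 1$. This is exactly the claimed inequality.

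Next I would feed the sharper clique bounds into the same estimate for the second assertion. A $2$-tree is chordal with $\omega(G)\le 3$, so Theorem~\ref{theorem:lct-chordal} gives $\lct(G)\le\max\{1,0\}=1$ (this also follows at once from Corollary~\ref{corollary:series-parallel}, since $2$-trees are series-parallel); a $3$-tree is chordal with $\omega(G)\le 4$, so $\lct(G)\le\max\{1,1\}=1$; and a $2$-connected chordal planar graph has $\omega(G)\le 4$, so again $\lct(G)\le 1$. In each of these cases $\lct(G)=1$, which by the characterization stated in the introduction means precisely that all longest cycles share a common vertex, i.e.\ all longest cycles intersect.

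There is no real obstacle: the corollary is a direct substitution into Theorem~\ref{theorem:lct-chordal}. The only points deserving a line of care are the identification of the clique numbers --- a $k$-tree never contains $K_{k+2}$ and a planar graph never contains $K_5$ --- and the handful of degenerate small cases where $G$ is not $2$-connected or has no cycle, in which the assertion is trivial or vacuous.
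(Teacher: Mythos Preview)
Your proposal is correct and follows exactly the approach the paper intends: the corollary is stated immediately after the two structural remarks (that $k$-trees are chordal with $\omega(G)\le k+1$ and that planar graphs are $K_5$-free), and the paper gives no further proof, leaving the direct substitution into Theorem~\ref{theorem:lct-chordal} implicit. Your write-up simply makes that substitution explicit, and your extra care about the $2$-connectedness hypothesis and degenerate small cases is appropriate though not strictly needed.
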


\subsection{Proof of the main lemma}\label{subsection:lct-chordal-main-lemma}

We next show the proof of Lemma~\ref{lemma:main-lemma}.
Before that, we present new useful definitions.
If~$C'$ and~$D'$ are paths fenced by a set of vertices~$K$ in a graph~$G$,
we write~$C' \sim_{K} D'$ if there exist vertices~${u \in V(C')}$
and~$v \in V(D')$ such that~$u$ and~$v$ are in the same component of~$G-K$.
Otherwise, we write~$C' \nsim_{K} D'$.
If the context is clear, we write~$C' \sim D'$ and~$C' \nsim D'$.
Given a cycle~$C$ that 3-crosses~$K$ at~$\{a,b,c\}$, we say 
that~$a$ \emph{breaks}~$C$\index{break} if~$C_{ab} \nsim_{K} C_{ac}$.
If the context is clear, we also say that~$a$ is a~$C$-\emph{breaking vertex}\index{breaking vertex} 
or that~$a$ is a \emph{breaking vertex}
(Figure~\ref{fig:defs-simeq}).
Recall that two paths or cycles $C'$ and
$C''$ are $K$-equivalent if
$V(C') \cap K = V(C'') \cap K$.

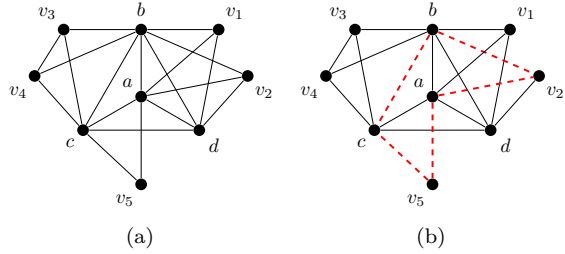
\begin{figure}[h]
\centering
\subfigure[ ]{
\resizebox{.3\textwidth}{!}{
\begin{tikzpicture}
[scale=0.7, label distance=3pt, every node/.style={fill,circle,inner sep=0pt,minimum size=6pt}]
 
    \node at (0,0)[myblue,label=below left :$c$,fill=black, circle](c) {};
    \node at (3,0)[myblue,label=below right:$d$,fill=black, circle](d) {};
    \node at (1.5,2.5981)[myblue,label=above :$b$,fill=black, circle](b) {};
    \node at (3.5,2.5981)[myblue,label=above right:$v_1$,fill=black, circle](v1) {};
    \node at (4.25,1.4)[myblue,label=below right:$v_2$,fill=black, circle](v2) {};
    \node at (-0.5,2.5981)[myblue,label=above left:$v_3$,fill=black, circle](v3) {};

    \node at (1.5,0.866)[myblue,label=above   left:$a$,fill=black, circle](a) {};
    \node at (-1.25,1.4)[myblue,label=below left:$v_4$,fill=black, circle](v4) {};
    \node at (1.5,-1.4)[myblue,label=below left:$v_5$,fill=black, circle](v5) {};

    \draw  (a) -- (b);
    \draw  (b) -- (c);
    \draw  (a) -- (c);
    \draw (a) -- (d);
    \draw (b) -- (d);
    \draw (c) -- (d);
    
    \draw (b) -- (v1);
    \draw (d) -- (v1);
    \draw (b) -- (v2);
    \draw (d) -- (v2);
    
    \draw (b) -- (v3);
    \draw (c) -- (v4);
    \draw (v3) -- (v4);
    \draw (v3) -- (c);
    \draw (v4) -- (b);
    
    \draw (a) -- (v1);
    \draw (a) -- (v2);
    \draw (a) -- (v5);
    \draw (v5) -- (c);

\end{tikzpicture}}
}
\subfigure[ ]{
\resizebox{.3\textwidth}{!}{
\begin{tikzpicture}
[scale=0.7, label distance=3pt, every node/.style={fill,circle,inner sep=0pt,minimum size=6pt}]
 
    \node at (0,0)[myblue,label=below left :$c$,fill=black, circle](c) {};
    \node at (3,0)[myblue,label=below right:$d$,fill=black, circle](d) {};
    \node at (1.5,2.5981)[myblue,label=above :$b$,fill=black, circle](b) {};
    \node at (3.5,2.5981)[myblue,label=above right:$v_1$,fill=black, circle](v1) {};
    \node at (4.25,1.4)[myblue,label=below right:$v_2$,fill=black, circle](v2) {};
    \node at (-0.5,2.5981)[myblue,label=above left:$v_3$,fill=black, circle](v3) {};

    \node at (1.5,0.866)[myblue,label=above   left:$a$,fill=black, circle](a) {};
    \node at (-1.25,1.4)[myblue,label=below left:$v_4$,fill=black, circle](v4) {};
    \node at (1.5,-1.4)[myblue,label=below left:$v_5$,fill=black, circle](v5) {};

    \draw  (a) -- (b);
    \draw  [dashed] [color=red] [line width=1.0pt] (b) -- (c);
    \draw (a) -- (c);
    \draw  (a) -- (d);
    \draw  (b) -- (d);
    \draw (c) -- (d);
    
    \draw (b) -- (v1);
    \draw (d) -- (v1);
    \draw (d) -- (v2);
    \draw  [dashed] [color=red] [line width=1.0pt] (b) -- (v2);
    \draw  [dashed] [color=red] [line width=1.0pt] (a) -- (v2);
    
    \draw  (b) -- (v3);
    \draw   (c) -- (v4);
    \draw (v3) -- (v4);
    \draw (v3) -- (c);
    \draw (v4) -- (b);
    
    \draw (a) -- (v1);
    \draw  [dashed] [color=red] [line width=1.0pt] (a) -- (v5);
    \draw  [dashed] [color=red] [line width=1.0pt] (v5) -- (c);

\end{tikzpicture}}
}
\caption{(a) A graph with~$K=\{a,b,c,d\}$.
    (b) Consider 
    cycle~$C=av_2bcv_5a$.
    Then
    $C_{ab} \nsim_K C_{ac}$,  $C_{ab} \nsim_K C_{bc}$ and  $C_{bc} \nsim_K C_{ac}$.
    Hence, $a$ breaks~$C$.
    }
\label{fig:defs-simeq}
\end{figure}

\begin{proof}[Proof of Lemma~\ref{lemma:main-lemma}]
Let~$K$ be a maximal clique in~$G$ and~$k \geq 2$ be its cardinality.
If there is a longest cycle that intersects~$K$ at most once, then we are done.
  Indeed, such cycle would be an {$\ell$-attractor} for~$K$ with~$\ell\in \{0,1\}$.
  Hence, we may assume 
  that every longest cycle intersects~$K$ at least twice. Note that this implies, as $\lct(G)>1$,
  that $k \geq 3$.
  Let~${(T, \calV)}$ be a clique tree of~$G$. We have the next two cases.

   \setcounter{case}{0}

  \begin{case} There exists a longest cycle that
  2-intersects $K$.
  
  Let $\{a,b\} \subseteq K$ be such that
  there exists a longest cycle that 2-intersect $K$ at $\{a,b\}$. If all cycles that 2-intersect $K$ at $\{a,b\}$ are fenced by $K$, then we have an $\ell$-attractor with $\ell \leq 2$ and we are done. Hence, there
  exists a longest cycle $C$ that 2-crosses~$K$ at~$\{a,b\}$.
As~$K$ is a maximal clique in~$G$, there exists
${t \in V(T)}$ such that~${V_t=K}$. 
It is straightforward to see that, as $k \geq 3$, no $ab$-part of $C$ is an edge.
Let $C'$ and $C''$ be the two $ab$-parts of $C$.
By Proposition \ref{prop:core-lpt-lct-Branch_t(P)=Branch_t(t')},
there exists
two edges $tt_1,tt_2 \in E(G)$
such that~${\Branch_{t}(C')=\Branch_{t}(t')}$
and~${\Branch_{t}(C'')=\Branch_{t}(t'')}$.
As $V_t$, $V_{t'}$ and $V_{t''}$
are maximal cliques in $G$, there exist two vertices
$c \in V_t \setminus V_{t'}$ and
$d \in V_t \setminus V_{t''}$.
Note that $c,d \notin \{a,b\}$.
Note also that, although~$C'$ and~$C''$ are in different components of $G-K$, it can be the case that $t'=t''$, implying that $c=d$.
Let us assume that $c \neq d$ (so $k\geq 4$), as the proof when $c=d$ is very similar.
As ${\lct(G) > \max \{1, \omega(G)-3\}} \geq k-3$, there exists
a longest cycle $D$ 
that does not contain any vertex of $V_t \setminus \{b,c,d\}$.
If $D$ and any cycle equivalent to $D$, is fenced by $K$, then we are done.
Indeed, as $D$ will be a $\ell$-attractor with $\ell \leq  3 \leq \min\{3,k-1\}$.
So, we may assume, without loss of generality that
$D$ crosses $K$.

Suppose for a moment that $D$ 2-intersects $K$ at $\{x,y\}$
(note that it can be the case that $\{x,y\} \cap \{b\} \neq \emptyset$).
Let $D'$ and $D''$ be the two $xy$-parts of $D$.
As both $C$ and $D$ cross $K$, we may assume, without
loss of generality, that $C'$ is internally disjoint from $D'$ and that $C''$ is internally disjoint from $D''$.
But then, $C' \cdot bx \cdot D' \cdot ya$
and $C'' \cdot bx \cdot D'' \cdot ya$ are both cycles,
one of them longer than~$L$, a contradiction (Figure \ref{fig:lemma-lct-2-crosses-nuevo-total}$(a)$).
Hence, we may assume that
$D$ 3-intersects $K$.
If $V(D) \cap K \cap \{a,b\} = \emptyset$ then the proof is very similar to the previous case.
So, let us assume that $D$ intersects
$K$
at $\{b,c,d\}$.
Recall that 
$c \in V_t \setminus V_{t'}$ and
$d \in V_t \setminus V_{t''}$. Thus,
by Proposition \ref{prop:core-sep-tt'}, 
we have
${C' \nsim_{K} D_{bc}}$ and
${C' \nsim_{K} D_{cd}}$.
Analogously,
${C'' \nsim_{K} D_{bd}}$.
Hence, $C' \cdot D_{bc} \cdot D_{cd} \cdot da$
and $C'' \cdot D_{bd} \cdot da$
are both cycles, one of them longer than $L$, a contradiction (Figure \ref{fig:lemma-lct-2-crosses-nuevo-total}$(b)$).

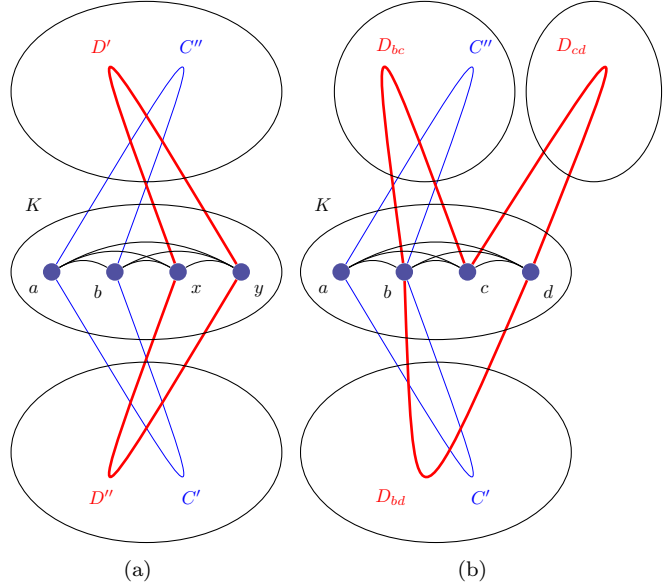
\begin{figure}[h]
        \centering
        \subfigure[ ]{
       \begin{tikzpicture}[scale=0.6, rotate=90, every node/.style={scale=0.7}]
    
    \node at (1.5,7) {$K$};
    
    \node at (0,6.6)[myblue,label=below left:$a$,fill=myblue, circle](a) {};
    \node at (0,5.2)[myblue,label=below left:$b$,fill=myblue, circle](b) {};  
    
    \node at (0,3.8)[myblue,label=below 
    right:$x$,fill=myblue, circle](x) {};
    \node at (0,2.4)[myblue,label=below right:$y$,fill=myblue, circle](y) {};

    \draw [color=blue] [line width=1.0pt, style =ultra thin] (a) .. controls (6, 3) .. (b);
    \draw [color=blue] [line width=1.0pt, style =ultra thin] (a) .. controls (-6,3) .. (b);
	\node at (5,3.5) [color=blue]{$C''$};
    \node at (-5,3.5) [color=blue]{$C'$};    
    
    \draw [color=red] [line width=1.0pt] (x) .. controls (6, 6) .. (y);
    \draw [color=red][line width=1.0pt] (y) .. controls (-6, 6) .. (x);
	\node at (5,5.5) [color=red]{$D'$};
    \node at (-5,5.5) [color=red]{$D''$};    
    
     \draw (0,4.5) ellipse (1.5 and 3);
    \draw (4,4.5) ellipse (2 and 3);
    \draw (-4,4.5) ellipse (2 and 3);
    
    \path	
    
    (a) edge [bend left] (b)
    (a) edge [bend left] (x)
    (a) edge [bend left] (y)
    (b) edge [bend left] (x)
    (b) edge [bend left] (y)
    (x) edge [bend left] (y)
    ;

    \end{tikzpicture}}
    \subfigure[ ]{
      \begin{tikzpicture}[scale=0.6, rotate=90, every node/.style={scale=0.7}]
    
    \node at (1.5,7) {$K$};
    
    \node at (0,6.6)[myblue,label=below left:$a$,fill=myblue, circle](a) {};
    \node at (0,5.2)[myblue,label=below left:$b$,fill=myblue, circle](b) {};  
    
    \node at (0,3.8)[myblue,label=below 
    right:$c$,fill=myblue, circle](c) {};
    \node at (0,2.4)[myblue,label=below right:$d$,fill=myblue, circle](d) {};
      
    \draw [color=blue] [line width=1.0pt, style =ultra thin] (a) .. controls (6, 3) .. (b);
    \draw [color=blue] [line width=1.0pt, style =ultra thin] (a) .. controls (-6,3) .. (b);
	\node at (5,3.5) [color=blue]{$C''$};
    \node at (-5,3.5) [color=blue]{$C'$};    
    
       \draw [color=red] [line width=1.0pt] (b) .. controls (6, 6) .. (c);
    \draw [color=red] [line width=1.0pt] (c) .. controls (6, 0) .. (d);
    \draw [color=red][line width=1.0pt] (b) .. controls (-6, 5) .. (d);
	\node at (5,5.5) [color=red]{$D_{bc}$};
	\node at (5,1.5) [color=red]{$D_{cd}$};
    \node at (-5,5.5) [color=red]{$D_{bd}$};  
    
    \draw (0,4.5) ellipse (1.5 and 3);
    \draw (4,4.75) ellipse (2 and 2);
    \draw (4,1) ellipse (2 and 1.5);
    \draw (-4,4.5) ellipse (2 and 3);
    
    \path	
    
    (a) edge [bend left] (b)
    (a) edge [bend left] (d)
    (a) edge [bend left] (c)
    (b) edge [bend left] (d)
    (b) edge [bend left] (c)
    (c) edge [bend left] (d)
    ;
    \end{tikzpicture}}
          \caption{Situations in the proof of Lemma~\ref{lemma:main-lemma}.}
\label{fig:lemma-lct-2-crosses-nuevo-total}
\end{figure}

\end{case}
  
  \begin{case} Every longest cycle intersects $K$ at least three times.
  
  As~${\lct(G) > \max \{1, \omega(G)-3\}} \geq k-3$,
  for every triangle~${\Delta \subseteq K}$, there exists a longest cycle that 3-intersects $K$ at $\Delta$.
 Suppose by contradiction that none of these cycles is a 3-attractor for~$K$.
Then, for every triangle~${\Delta \subseteq K}$, there exists
a longest cycle~$C_{\Delta}$ that 3-crosses~$K$ at~$\Delta$.
Let ${\Delta \subseteq K}$.
As $C_{\Delta}$ 3-crosses~$K$ 
at~$\Delta$,~$\Delta$ has at least two~$C_{\Delta}$-breaking vertices.
As there are~${|K| \choose 3}$ triangles in~$K$,
by pigeonhole principle, there exists
a vertex~$x \in K$
such that~$x$ is a breaking vertex for at least
$\frac{(|K|-1)(|K|-2)}{3}$
of the triangles incident to~$x$.

Suppose for a moment that $|K|\geq 5$.
Then, there exists two edge-disjoint triangles incident to~$x$
such that~$x$ is a breaking vertex for both of them.
Let~$xab$ and~$xcd$ be such triangles,
and let~$C$ and~$D$ be the corresponding cycles respectively.
As~$x$ breaks both~$C$ and~$D$,
without loss of generality we may assume that
${C_{xa} \nsim_{K} D_{xc}}$
and that~$C_{xb} \nsim_{K} D_{xd}$.
Also, there exists a part~${P \in \{D_{xc},D_{xd}\}}$ such that~${C_{ab} \nsim_{K} P}$ and
a part~${Q \in \{C_{xa},C_{xb}\}}$ such that~${D_{cd} \nsim_{K} Q}$.
Without loss of generality, we may suppose that~$C_{ab} \nsim_{K} D_{xd}$.
If~$D_{cd} \nsim_{K} C_{xa}$ then
$D_{xc} \cdot D_{cd} \cdot da \cdot C_{ax}$
and~$D_{dx} \cdot C_{xb} \cdot C_{ba} \cdot ad$ are cycles,
a contradiction (Figure~\ref{fig:lct-3-crosses}(a)).
So~$D_{cd} \sim_{K} C_{xa}$, implying that~$D_{cd} \nsim_{K} C_{xb}$.
If~$C_{ab} \nsim_{K} D_{cd}$ then
$C_{xa} \cdot ac \cdot D_{cx}$
and~$C_{xb} \cdot C_{ba} \cdot ac \cdot D_{cd} \cdot D_{dx}$
are cycles, a contradiction (Figure~\ref{fig:lct-3-crosses}(b)).
So,~$C_{ab} \sim_{K} D_{cd}$.
As~$D_{cd} \sim_{K} C_{xa}$, we conclude
that~$C_{ab} \sim_{K} C_{xa}$.
As~$C_{xa} \nsim_{K} D_{xc}$,
we conclude that~$C_{ab} \nsim_{K} D_{xc}$.
Then,~$C_{xa} \cdot C_{ab} \cdot bc \cdot D_{cx}$
and~$C_{xb} \cdot bc \cdot D_{cd} \cdot D_{dx}$ are both cycles, again a contradiction (Figure~\ref{fig:lct-3-crosses}(c)).

\begin{figure}[ht]
\centering

\subfigure[ ]{
\begin{tikzpicture}[thick, scale=0.75,
  every node/.style={draw,circle},
  psnode/.style={fill=myblue},
  qsnode/.style={fill=mygreen},
  rsnode/.style={fill=myred},
  style1/.style={ellipse,draw,inner sep=-1pt,text width=1.5cm},
  style2/.style={ellipse,draw,inner sep=0pt,text width=1cm,text heigth=16cm,ellipse ratio=2},
]
\begin{scope}[start chain=going below,node distance=7mm,yshift=1.5cm]
\node[psnode,on chain] (Cxa) [label=left: $C_{xa}$] {};
\node[psnode,on chain] (Cxb) [label=left: $C_{xb}$] {};
\node[psnode,on chain] (Cab) [label=left: $C_{ab}$] {};
\end{scope}
\begin{scope}[xshift=2cm,yshift=1.5cm,start chain=going below,node distance=7mm]
\node[qsnode,on chain] (Dxc) [label=right: $D_{xc}$] {};
\node[qsnode,on chain] (Dxd) [label=right: $D_{xd}$] {};
\node[qsnode,on chain] (Dcd) [label=right: $D_{cd}$] {};
\end{scope}
\draw (Cxa) -- (Dxc);
\draw (Cxb) -- (Dxd);	
\draw (Cab) -- (Dxd);
\draw (Cxa) -- (Dcd);
\end{tikzpicture}
}\subfigure[ ]{
\begin{tikzpicture}[thick,scale=0.75,
  every node/.style={draw,circle},
  psnode/.style={fill=myblue},
  qsnode/.style={fill=mygreen},
  rsnode/.style={fill=myred},
  style1/.style={ellipse,draw,inner sep=-1pt,text width=1.5cm},
  style2/.style={ellipse,draw,inner sep=0pt,text width=4cm,text heigth=16cm,ellipse ratio=2},
]
\begin{scope}[start chain=going below,node distance=7mm,yshift=1.5cm]
\node[psnode,on chain] (Cxa) [label=left: $C_{xa}$] {};
\node[psnode,on chain] (Cxb) [label=left: $C_{xb}$] {};
\node[psnode,on chain] (Cab) [label=left: $C_{ab}$] {};
\end{scope}
\begin{scope}[xshift=2cm,yshift=1.5cm,start chain=going below,node distance=7mm]
\node[qsnode,on chain] (Dxc) [label=right: $D_{xc}$] {};
\node[qsnode,on chain] (Dxd) [label=right: $D_{xd}$] {};
\node[qsnode,on chain] (Dbd) [label=right: $D_{cd}$] {};
\end{scope}
\draw (Cxa) -- (Dxc);
\draw (Cxb) -- (Dxd);
\draw (Cxb) -- (Dcd);	
\draw (Cab) -- (Dcd);
\draw (Cab) -- (Dxd);
\end{tikzpicture}
}\subfigure[ ]{
\begin{tikzpicture}[thick,scale=0.75,
  every node/.style={draw,circle},
  psnode/.style={fill=myblue},
  qsnode/.style={fill=mygreen},
  rsnode/.style={fill=myred},
  style1/.style={ellipse,draw,inner sep=-1pt,text width=1.5cm},
  style2/.style={ellipse,draw,inner sep=0pt,text width=4cm,text heigth=16cm,ellipse ratio=2},
]
\begin{scope}[start chain=going below,node distance=7mm,yshift=1.5cm]
\node[psnode,on chain] (Cva) [label=left: $C_{xa}$] {};
\node[psnode,on chain] (Cvb) [label=left: $C_{xb}$] {};
\node[psnode,on chain] (Cab) [label=left: $C_{ab}$] {};
\end{scope}
\begin{scope}[xshift=2cm,yshift=1.5cm,start chain=going below,node distance=7mm]
\node[qsnode,on chain] (Dvc) [label=right: $D_{xc}$] {};
\node[qsnode,on chain] (Dvd) [label=right: $D_{xd}$] {};
\node[qsnode,on chain] (Dbd) [label=right: $D_{cd}$] {};
\end{scope}
\draw (Cxa) -- (Dxc);
\draw (Cxb) -- (Dxd);
\draw (Cxb) -- (Dcd);
\draw (Cab) -- (Dxc);		
\end{tikzpicture}}
\subfigure[ ]{
\begin{tikzpicture}[thick, scale=0.75,
  every node/.style={draw,circle},
  psnode/.style={fill=myblue},
  qsnode/.style={fill=mygreen},
  rsnode/.style={fill=myred},
  style1/.style={ellipse,draw,inner sep=-1pt,text width=1.5cm},
  style2/.style={ellipse,draw,inner sep=0pt,text width=4cm,text heigth=16cm,ellipse ratio=2},
]
\begin{scope}[start chain=going below,node distance=7mm,yshift=1.5cm]
\node[psnode,on chain] (Cab) [label=left: $C_{xb}$] {};
\node[psnode,on chain] (Cbd) [label=left: $C_{bd}$] {};
\node[psnode,on chain] (Cad) [label=left: $C_{xd}$] {};
\end{scope}

\begin{scope}[xshift=2cm,yshift=1.5cm,start chain=going below,node distance=7mm]
\node[qsnode,on chain] (Dac) [label=right: $D_{xc}$] {};
\node[qsnode,on chain] (Dcd) [label=right: $D_{cd}$] {};
\node[qsnode,on chain] (Dad) [label=right: $D_{xd}$] {};
\end{scope}

\draw (Cab) -- (Dcd);
\draw (Cbd) -- (Dac);	
\draw (Cab) -- (Dad);	
\draw (Cad) -- (Dac);	
\end{tikzpicture}
}\subfigure[ ]{
\begin{tikzpicture}[thick, scale=0.75,
  every node/.style={draw,circle},
  psnode/.style={fill=myblue},
  qsnode/.style={fill=mygreen},
  rsnode/.style={fill=myred},
  style1/.style={ellipse,draw,inner sep=-1pt,text width=1.5cm},
  style2/.style={ellipse,draw,inner sep=0pt,text width=4cm,text heigth=16cm,ellipse ratio=2},
]

\begin{scope}[start chain=going below,node distance=7mm,yshift=1.5cm]
\node[psnode,on chain] (Cab) [label=left: $C_{xb}$] {};
\node[psnode,on chain] (Cbd) [label=left: $C_{bd}$] {};
\node[psnode,on chain] (Cad) [label=left: $C_{xd}$] {};
\end{scope}

\begin{scope}[xshift=2cm,yshift=1.5cm,start chain=going below,node distance=7mm]
\node[qsnode,on chain] (Dac) [label=right: $D_{xc}$] {};
\node[qsnode,on chain] (Dcd) [label=right: $D_{cd}$] {};
\node[qsnode,on chain] (Dad) [label=right: $D_{xd}$] {};
\end{scope}

\draw (Cab) -- (Dcd);
\draw (Cbd) -- (Dac);	
\draw (Cab) -- (Dac);	
\draw (Cad) -- (Dad);	
\draw (Cbd) -- (Dcd);	
\end{tikzpicture}

}\subfigure[ ]{
\begin{tikzpicture}[thick, scale=0.75,
  every node/.style={draw,circle},
  psnode/.style={fill=myblue},
  qsnode/.style={fill=mygreen},
  rsnode/.style={fill=myred},
  style1/.style={ellipse,draw,inner sep=-1pt,text width=1.5cm},
  style2/.style={ellipse,draw,inner sep=0pt,text width=4cm,text heigth=16cm,ellipse ratio=2},
]
\begin{scope}[start chain=going below,node distance=7mm,yshift=1.5cm]
\node[psnode,on chain] (Cab) [label=left: $C_{xb}$] {};
\node[psnode,on chain] (Cbd) [label=left: $C_{bd}$] {};
\node[psnode,on chain] (Cad) [label=left: $C_{xd}$] {};
\end{scope}

\begin{scope}[xshift=2cm,yshift=1.5cm,start chain=going below,node distance=7mm]
\node[qsnode,on chain] (Dac) [label=right: $D_{xc}$] {};
\node[qsnode,on chain] (Dcd) [label=right: $D_{cd}$] {};
\node[qsnode,on chain] (Dad) [label=right: $D_{xd}$] {};
\end{scope}

\draw (Cab) -- (Dac);
\draw (Cab) -- (Dcd);
\draw (Cbd) -- (Dac);	
\draw (Cad) -- (Dad);	
\draw (Cbd) -- (Dad);
\draw (Cbd) -- (Dad);
\draw (Cad) -- (Dcd);

\end{tikzpicture}
}
\caption{
Each bipartite graph represents the situation of the cycles~$C$ and~$D$ in
the proof of Lemma~\ref{lemma:main-lemma}. 
Each side of the bipartition 
has three vertices that represent the parts of each cycle.  
There is a straight edge in the graph 
if the corresponding parts, say~$P$ and~$Q$, are such that~$P \nsim_{K} Q$.}
\label{fig:lct-3-crosses}
\end{figure}
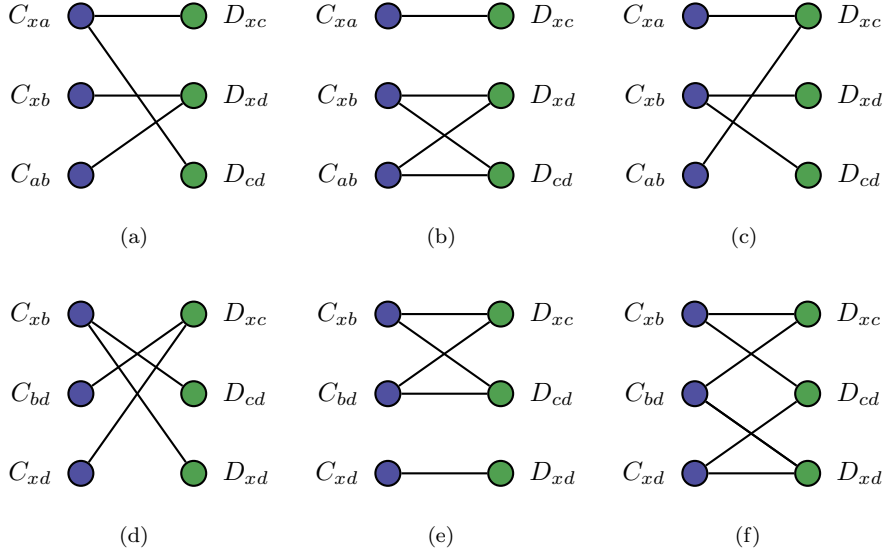

Now suppose that $|K|=4$.
Then $x$ is a breaking vertex for two triangles incident to $x$.
Let~$xbd$ and~$xcd$ be these two triangles.
Let~$C$ and~$D$ be the corresponding longest cycles respectively.
Hence,
\begin{equation}\label{eq:C_ab_nsim_C_ad_and_D_ac_nsim_D_ad}
C_{xb} \nsim_{K} C_{xd} \mbox{ and } D_{xc} \nsim_{K} D_{xd}.
\end{equation}
Also, 
note that
\begin{equation}\label{eq:C_ab_nsim_D_cd_and_C_bd_nsim_D_ac}
C_{xb} \nsim_{K} D_{cd} \mbox{ and } C_{bd} \nsim_{K} D_{xc}.
\end{equation}
Is the proof of \eqref{eq:C_ab_nsim_D_cd_and_C_bd_nsim_D_ac}:
by Proposition \ref{prop:core-lpt-lct-Branch_t(P)=Branch_t(t')},
there exists
an edge $tt' \in E(T)$
such that~${\Branch_{t}(C_{xb})=\Branch_{t}(t')}$.
As both $V_t$ and $V_{t'}$
are maximal cliques in $G$, there exists a vertex in $V_t \setminus V_{t'} \subseteq \{c,d\}$.
Thus,
by Proposition \ref{prop:core-sep-tt'}, $V_t \cap V_{t'}$ separates~$C_{xb}$ from ${D_{cd}}$;
hence ${C_{xb} \nsim D_{cd}}$.
Analogously,
$C_{bd} \nsim D_{xc}$.

By \eqref{eq:C_ab_nsim_C_ad_and_D_ac_nsim_D_ad},
either
$C_{xb} \nsim D_{xd}$
and~$C_{xd} \nsim D_{xc}$,
or
$C_{xb} \nsim D_{xc}$ and~$C_{xd} \nsim D_{xd}$.
In the first case, by \eqref{eq:C_ab_nsim_D_cd_and_C_bd_nsim_D_ac},
$C_{bx} \cdot D_{xd} \cdot D_{dc} \cdot cb$
and 
$D_{xc} \cdot cb \cdot C_{bd} \cdot C_{dx}$ are cycles, one of them longer than~$L$,
a contradiction (Figure~\ref{fig:lct-3-crosses}(d)).

In the second case,
${C_{bd} \sim D_{cd}}$.
Indeed, suppose for a moment that
${C_{bd} \nsim D_{cd}}$.
Thus,~${C_{xd} \cdot D_{xd}}$
and~${C_{xb}\cdot C_{bd} \cdot D_{dc} \cdot D_{cx}}$
are cycles (Figure~\ref{fig:lct-3-crosses}(e)).
But then,~${C_{xd} \cdot D_{xd}}$ is a longest cycle that
2-intersects~$V_t$, a contradiction.
Hence ${C_{bd} \sim D_{cd}}$.
If $C_{bd} \nsim D_{xd}$ and $C_{xd} \nsim D_{cd}$, then
$C_{bx}\cdot C_{xd} \cdot dc \cdot D_{cd}$ and
$D_{cx}\cdot D_{xd} \cdot C_{db} \cdot bc$
are cycles, a contradiction (Figure~\ref{fig:lct-3-crosses}(f)).
So we may assume, without loss of generality, that $C_{xd} \sim D_{cd}$.
Thus, we have, $C_{bd} \sim C_{xd} \sim D_{cd}$. By Proposition \ref{prop:core-sep-tt'}, there
exists an edge $tt' \in E(G)$ such that
$\Branch_t(t') = \Branch_t(C_{xd})$.
As both $V_t$ and $V_{t'}$ are maximal
cliques in $G$, there exists a vertex
in $V_t \setminus V_{t'} \subseteq \{b,c\}$.
If $V_t \setminus V_{t'} = \{b\}$, then by
Proposition \ref{prop:core-sep-tt'}, $V_t \cap V_{t'}$ separates $C_{xd}$ from $C_{bd}$, a contradiction.
If $V_t \setminus V_{t'} = \{c\}$, then by
Proposition~\ref{prop:core-sep-tt'}, $V_t \cap V_{t'}$ separates $C_{xd}$ from $C_{cd}$, again a contradiction.
\end{case}
This concludes the proof of the lemma.
\end{proof}

\section{Concluding remarks} \label{section:conclusion}

In this paper, we showed upper bounds for the minimum cardinality of a set
of vertices
that intersects all longest cycles in a 2-connected partial $k$-tree and in a 2-connected chordal graph.
We showed that, in partial $k$-trees, there is a set of at most
$k-1$ vertices that intersects all longest cycles of the graph,
and that in chordal graphs there is such a set with cardinality at most $\max\{1,\omega-3\}$, where $\omega$ is the cardinality of a maximum clique of the graph.
This implies that all longest cycles intersect
in partial 2-trees and in 3-trees.

The question of whether~$\lct(G)=1$ 
when $G$ is a 2-connected chordal graph
is still open, we conjecture a positive answer to that question.
As any graph is a partial $k$-tree for some~$k$, we have that $\lct(G)>1$ when~$G$ is a 2-connected partial $k$-tree.
However, for partial 3-trees, it has been proved that all longest cycles intersect \cite{Gutierrez18, JuanPhd}.
For partial 4-trees, there exists a 2-connected graph~$G$
given by Thomassen on
15 vertices~\cite[Figure 16]{Shabbir13}, with $\tw(G)=4$ and~${\lct(G)=2}$.
Hence,
by Theorem~\ref{theorem:lct<tw-1},
we conclude the following corollary and conjecture
that ${\ell = 2}$.
\begin{corollary}\label{corollary:lct-partial-4-tree}
Let~$\ell$ be the minimum integer such that
$\lct(G) \leq \ell$ for every 2-connected partial 4-tree~$G$.
Then,~$\ell \in \{2,3\}$.
\end{corollary}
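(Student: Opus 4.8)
The plan is to sandwich the integer $\ell$ between $2$ and $3$, using on one side the general bound already established in this paper and on the other side a single concrete extremal graph. For the upper bound $\ell \le 3$, I would simply invoke Theorem~\ref{theorem:lct<tw-1}: a $2$-connected partial $4$-tree $G$ has $\tw(G) \le 4$, hence $\lct(G) \le 4-1 = 3$. Since this holds for \emph{every} $2$-connected partial $4$-tree, the minimum integer $\ell$ with $\lct(G) \le \ell$ for all such $G$ satisfies $\ell \le 3$.

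For the lower bound $\ell \ge 2$, it suffices to exhibit one $2$-connected partial $4$-tree $G$ with $\lct(G) > 1$, i.e.\ one in which no single vertex meets every longest cycle. I would take the $15$-vertex graph of Thomassen recorded in~\cite[Figure 16]{Shabbir13}. Two facts need verification: first, that $G$ is a partial $4$-tree, which I would establish by displaying a tree decomposition of $G$ of width $4$ (equivalently, $\tw(G) \le 4$); and second, that $\lct(G) \ge 2$, for which it is enough to enumerate the longest cycles of $G$ — a finite family, as $G$ has only $15$ vertices — and observe that their common intersection is empty, so that no transversal of size $1$ exists. (The companion bound $\lct(G) \le 2$ is not even needed for the corollary, though it is what~\cite{Shabbir13} records.) Both checks are finite computations on a fixed small graph.

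Putting the two bounds together, $\ell \neq 0, 1$ by the extremal example and $\ell \le 3$ by Theorem~\ref{theorem:lct<tw-1}, whence $\ell \in \{2,3\}$, as claimed. The only genuine \emph{obstacle} is bookkeeping on the cited graph: one must be sure it is $2$-connected, that its treewidth is exactly $4$ — if it were at most $3$ it would be a $2$-connected partial $3$-tree and then $\lct(G)=1$ by the results mentioned in the discussion, contradicting $\lct(G)=2$ — and that it genuinely has $\lct(G) \ge 2$; all of this is delegated to the reference. There is no new conceptual content beyond combining Theorem~\ref{theorem:lct<tw-1} with this example, exactly as in the passage from Theorem~\ref{theorem:lct<tw-1} to Corollary~\ref{corollary:series-parallel}.
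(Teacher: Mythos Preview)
Your proposal is correct and follows exactly the paper's own argument: the upper bound $\ell \le 3$ comes from Theorem~\ref{theorem:lct<tw-1} applied with $k=4$, and the lower bound $\ell \ge 2$ comes from Thomassen's $15$-vertex example with $\tw(G)=4$ and $\lct(G)=2$ cited from~\cite{Shabbir13}. The paper simply asserts these two facts in the text preceding the corollary, delegating the verification of the example's properties to the reference, just as you do.
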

Transversals of longest paths has been also studied
\cite{Cerioli2020,Cerioli2019,Chen17,deRezende13,Golan2018,Joos15, RautenbachS14}.
Also,
other questions about intersection of longest cycles have been rised
by several authors
\cite{Chen98,Hippchen08,Jendrol97,Stewart95}.

\bibliography{bibliografia}

\end{document}